\newtheorem{proposition}{Proposition}[section]
\newtheorem{theorem}{Theorem}[section]
\newtheorem{lemma}[proposition]{Lemma}
\newtheorem{definition}[proposition]{Definition}
\newtheorem{remark}{Remark}
\title{Periodic Response Solutions to Multi-Dimensional Nonlinear Schr\"odinger equation with unbounded perturbation}
\author{Zuhong You \thanks{The work was supported  by National Natural Science Foundation of China ( Grant NO. 12371189).} $^a$ and Xiaoping Yuan \footnotemark[1] $^b$  \\
			$^a$ School of Mathematical Sciences, Fudan University, Shanghai 200433, P. R. China \\
             $^b$ School of Mathematical Sciences, Fudan University, Shanghai 200433, P. R. China }
\date{}
\begin{document}
\maketitle
\begin{abstract}
    By applying the Craig-Wayne-Bourgain (CWB) method, we establish the existence of periodic response solutions to multi-dimensional nonlinear Schr\"{o}dinger equations (NLS) with unbounded perturbation.
\end{abstract}
\section{Introduction}

In the last 40 years, significant progress has been made in the KAM (Kolmogorov-Arnold-Moser) theory for nonlinear Hamiltonian partial differential equations (PDEs). The first existence results were established by Kuksin\cite{Kuk87} and Wayne\cite{Way90}, who studied nonlinear wave (NLW) and  Schr\"{o}dinger equations (NLS) in one spatial dimension under Dirichlet boundary conditions.
There are two main approaches: the classical KAM technique (e.g., \cite{Kuk87}, \cite{Way90},  \cite{Kuk93}, \cite{Pos96}, \cite{LY00}, \cite{eliasson2010kam},  \cite{yuan2021kam}) and the Craig-Wayne-Bourgain (CWB) method (e.g.,  \cite{craig1993newton}, \cite{bourgain1994construction},\cite{bourgain1995construction}, \cite{bourgain1998quasi},\cite{bourgain2005green}, \cite{BP11duke}, \cite{berti2013quasi}, \cite{wang2016energy}, \cite{BB20} ). There are too many works to list here in this field.

In the context of unbounded perturbations, where the nonlinearity involves derivatives, KAM theory encounters greater challenges compared to the bounded case. The initial KAM theorem for unbounded perturbations was pioneered by Kuksin\cite{Kuk98kdv}\cite{Kuk00}, who studied small denominator equations with large variable coefficients and developed applicable KAM theorems to investigate the persistence of finite-gap solutions for the KdV equation. The estimate concerning the small-denominator equation with
large variable coefficients is now called the Kuksin lemma\cite{Kuk87}.
Later, Liu and Yuan\cite{LY10spec}\cite{LY11dnls} extended the Kuksin lemma to the limiting case and established the KAM theorem for quantum Duffing oscillators, derivative nonlinear Schrödinger equations (DNLS) and and Benjamin-Ono equations. The Italian school (e.g., \cite{BBP13dnlw}, \cite{BBM14}, \cite{BBM16kdv}, \cite{BBE18}) has developed a novel approach for addressing unbounded perturbations.

All existent approaches dealing with
unbounded perturbations rely on classical KAM techniques. As for the CWB method, it was only briefly mentioned by Bourgain in \cite{bourgain1994construction}, where he suggested in the final remark that his analysis could be extended to unbounded perturbations, such as the Hamiltonian “derivative” wave equation. However, no details were provided, and  there have been no further developments applying the CWB method for unbounded perturbations until today. Moreover, all existing results for unbounded perturbations are restricted to the case where the spatial dimension is equal to $1$. To the best of our knowledge, there are currently no established results for cases where the perturbations are unbounded and the spatial dimension is greater than one.
In this paper, we take a step towards addressing this gap by employing the CWB method to study periodic response solutions of multi-dimensional nonlinear  Schr\"{o}dinger equations with fractional derivative perturbations (unbounded ones).

Consider the multi-dimensional nonlinear Schr\"{o}dinger equations (NLS) with fractional derivatives:
\begin{equation}
    i u_{t}-\Delta u+u+D^{\alpha}(|u|^{2}u)=\varepsilon P(x,\omega t)
    \label{peiodic force equation}
\end{equation}
under periodic boundary conditions (i.e., $x\in \mathbb{T}^{d}$). Here,  
$P(x,\theta):\mathbb{T}^{d}\times \mathbb{T}\to \mathbb{R}$ is Gevrey smooth. Furthermore, the fractional derivative $D^{\alpha}$ is defined as follows:

Let  $f\in L^{2}(\mathbb{T}^{d})$ be a smooth periodic function on the torus $\mathbb{T}^{d}$. We define
\begin{equation}
    D^{\alpha} f(x)=\sum\limits_{n\in \mathbb{Z}^{d}} \langle n\rangle^{\alpha}\hat{f}(n)e^{in\cdot x},
\end{equation}
where $\hat{f}(n)$ is the $n$-th Fourier coefficient of $f$ and  $\langle n\rangle=\sqrt{\sum\limits_{1\le j\le d} |n_{j}|^{2}+1}$.

Furthermore, we define the Gevrey norm of  $P$ as follows:
\begin{equation}
    \lVert P \rVert_{c}=\sum\limits_{n,k} |\hat{P}(n,k)|e^{2(|n|+|k|)^{c}},
\end{equation}
where $\hat{P}(n,k)$ refers to the Fourier coefficients of $P$ and $c$ is referred to the Gevrey index.

We assume $\omega\in[1,2]$ is the parameter. Our aim is to prove that there is a $\frac{2 \pi}{\omega}$-time periodic solution to (\ref{peiodic force equation}) for `` most " $\omega\in [1,2]$. 
More precisely, we have the following theorem:
\begin{theorem}
    Assume $\alpha<\frac{1}{30(d(2d+2)^{d+2}+2)}$,   $P(x,\theta): \mathbb{T}^{d}\times \mathbb{T}\to \mathbb{R}$ is Gevrey smooth, $c$ is small enough and $\lVert P \rVert_{c}<1$. Then, for $\varepsilon>0$ small enough, there exists a set $I_{\varepsilon}\subset [1,2]$ with
    \begin{equation}
        \textup{mes }  I_{\varepsilon} \to 0, \textup{ as } \varepsilon\to 0,
    \end{equation}
    such that for any $\omega\in [1,2]\setminus I_{\varepsilon}$, there exists a Gevrey smooth function 
    \begin{equation}
        u: \mathbb{T}^{d} \times \mathbb{R} \to \mathbb{R},
    \end{equation}
    which solves (\ref{peiodic force equation})  and is $\frac{2\pi }{\omega}$ periodic in time. Moreover, we have
    \begin{equation}
        \sum\limits_{n,k} |\hat{u}(n,k)|e^{\frac{1}{2}(|n|+|k|)^{c}}<C \varepsilon^{1/4},
    \end{equation}
    where $\hat{u}(n,k)$ is the Fourier coefficient of $u$ and $C$ is a constant depending on $d,\alpha$ and $c$.
    \label{maintheorem}
\end{theorem}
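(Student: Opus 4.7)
The plan is to construct the $2\pi/\omega$-periodic solution via a Newton / Nash--Moser iteration in Fourier space, following the CWB scheme adapted to the unbounded nonlinearity. Writing $u(x,t)=U(x,\omega t)$ reduces (\ref{peiodic force equation}) to
\begin{equation}
F(U,\omega) := i\omega U_\theta - \Delta U + U + D^\alpha(|U|^2 U) - \varepsilon P = 0
\end{equation}
on $\mathbb{T}^d\times\mathbb{T}$; in the Fourier basis $\{e^{i(n\cdot x+k\theta)}\}$ the linear part is diagonal with symbol $\mu(n,k,\omega)=|n|^2+1-\omega k$. I would work in the Gevrey scale $\mathcal{G}_s=\{U:\|U\|_s=\sum_{n,k}|\hat U(n,k)|e^{s(|n|+|k|)^c}<\infty\}$ with decreasing radii $s_\nu\searrow 1/2$, set $U_0=0$, and at step $\nu$ solve the linearized equation $L_\nu^{N_\nu} h_\nu=-\Pi_{N_\nu}F(U_\nu,\omega)$ restricted to Fourier modes $|(n,k)|\le N_\nu$, where $L_\nu=i\omega\partial_\theta-\Delta+1+D^\alpha M_\nu$ and $M_\nu h=2|U_\nu|^2 h+U_\nu^2\bar h$. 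Provided the linearized inverses are bounded sub-exponentially in $N_\nu$, the update $U_{\nu+1}=U_\nu+h_\nu$ yields super-exponential convergence in the Gevrey norms and a limit $U$ with $\|U\|_{1/2}\lesssim\varepsilon^{1/4}$.

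\textbf{Multi-scale resolvent and measure estimates.} The heart of the proof is the Green's function estimate for $(L_\nu^{N_\nu})^{-1}$ when $\omega$ lies outside a small set $I_\nu\subset[1,2]$. I would prove by induction on dyadic scales $N$ a large-deviation bound
\begin{equation}
\mathrm{mes}\bigl\{\omega\in[1,2]:\ L_\nu^{N}\text{ is singular}\bigr\}\lesssim N^{-\tau},
\end{equation}
together with the off-set operator bound $\|(L_\nu^N)^{-1}\|\le e^{N^c}$. The usual CWB ingredients then apply: a singular-site analysis separating modes where $|\mu(n,k,\omega)|$ is very small, exploiting the sub-polynomial density of near-resonant lattice points in dimension $d$; a resolvent/coupling lemma patching good cubes of size $\sim N^{1/2}$ to invert at scale $N$; and a semialgebraic reduction of the bad set in $\omega$ to yield the measure estimate. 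Summing $I_\nu$ over $\nu$ gives $\mathrm{mes}(I_\varepsilon)\to 0$ as $\varepsilon\to 0$.

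\textbf{Main obstacle.} The decisive new difficulty is the unboundedness of $D^\alpha M_\nu$: its matrix element between modes $(n,k)$ and $(n',k')$ carries an amplifying factor $\langle n\rangle^\alpha$ on top of $\widehat{(2|U_\nu|^2)}(n-n',k-k')$, so off-diagonal entries are no longer suppressed by the smallness of $U_\nu$ alone. In the coupling lemma this $\langle n\rangle^\alpha$-loss threatens to compound through the multi-scale recursion and destroy the resolvent bound. My strategy is to absorb the loss by conjugating with a diagonal weight, working with $\langle n\rangle^{-\alpha/2}L_\nu\langle n'\rangle^{-\alpha/2}$ so that the perturbative piece becomes a bounded operator of norm $\lesssim\|U_\nu\|_s^2$; the residual polynomial factor is easily digested by the Gevrey weight. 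Each dyadic step of the induction inflates the $\alpha$-loss by a factor depending on $d$ and on the combinatorial complexity of singular clusters (bounded by $(2d+2)^{d+2}$), so the total amplification remains finite precisely when $\alpha<1/(30(d(2d+2)^{d+2}+2))$; this accounts for the explicit smallness condition in the statement. Once this resolvent estimate is in place, the Newton iteration closes and yields the claimed Gevrey periodic solution.
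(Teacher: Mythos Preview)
Your overall CWB/Newton framework is sound and the conjugation idea is in the right spirit, but your account of \emph{why} the specific bound on $\alpha$ arises is wrong, and the key structural lemma is missing.

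A minor point first: the paper conjugates one-sidedly, writing $T=\Lambda\tilde T$ with $\tilde T=\Lambda^{-1}D+\varepsilon^{2/3}S$, so that the off-diagonal part becomes exactly the bounded Toeplitz-type matrix $S$ and $\tilde T$ is self-adjoint. Your symmetric conjugation $\Lambda^{-1/2}L\Lambda^{-1/2}$ leaves the off-diagonal as $\Lambda^{1/2}S\Lambda^{-1/2}$, which is bounded (Gevrey decay in $|n-n'|$ absorbs the ratio $\langle n\rangle^{\alpha/2}\langle n'\rangle^{-\alpha/2}$) but no longer Toeplitz; workable, but clumsier.

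The real gap is your claim that ``each dyadic step of the induction inflates the $\alpha$-loss'' and that this accumulation produces the constraint. That is not the mechanism. After the (one-sided) conjugation the perturbation $S$ carries \emph{no} $\langle n\rangle^{\alpha}$ factor; the unboundedness is pushed entirely into the diagonal $\tilde D_{\pm,n,k}=\langle n\rangle^{-\alpha}(\pm k\omega+|n|^{2}+1)$. The effect is that at scale $N$ the singular sites are those with $|\pm k\omega+|n|^{2}+1|<\langle n\rangle^{\alpha}\le N^{\alpha}$, i.e.\ the non-resonance threshold now \emph{grows} with the scale. To run the coupling lemma one must show these singular sites decompose into well-separated clusters of small diameter, and this is where the exponent $(2d+2)^{d+2}$ enters: it is the constant $\tilde C_d$ in Bourgain's lattice separation lemma, which partitions $\mathbb Z^d$ into pieces $\{\pi_\zeta\}$ with $\mathrm{diam}\,\pi_\zeta<B^{\tilde C_d}$ such that $|n-n'|+\big||n|^2-|n'|^2\big|>B$ whenever $n,n'$ lie in different pieces. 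Taking $B=N^{\delta}$ with $\delta>\alpha$, the singular clusters acquire diameter $\lesssim N^{(d\tilde C_d+2)\delta}$ and mutual separation $\gtrsim N^{\delta}$. The coupling lemma then requires these clusters to be small relative to the ambient box, forcing $(d\tilde C_d+2)\delta<\tfrac{1}{30}$; since one also needs $\delta>\alpha$, this yields exactly the stated bound. So the constraint is a \emph{single-scale geometric} one coming from the separation-lemma exponent, not an accumulation of losses across scales. Your proposal does not identify this lemma, and without it the singular-site analysis cannot close.

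Two smaller points. The paper does not use semialgebraic geometry for the measure estimate; because each singular cluster $\tilde\Omega_{r,\kappa}$ has controlled size, a direct first-order eigenvalue variation in $\omega^{-1}$ on the restricted matrix suffices. And the paper rescales $u\mapsto\varepsilon^{1/3}u$ at the outset, putting an explicit $\varepsilon^{2/3}$ in front of the nonlinearity so that $S$ is small from the start; your choice $U_0=0$ achieves smallness of $M_\nu$ only after the first Newton step, which is fine but should be made explicit.
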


\begin{remark}
    In this paper, the Gevrey index $c$ is taken to be small enough for technical simplicity.  To extend the proof to arbitrary values of $c\in (0,1)$, a more refined coupling lemma and more precise estimates of the Green's function are required.  
\end{remark}
\begin{remark}
    The nonlinearity $|u|^{2}u$ can be replaced by any $\frac{\partial H}{\partial \bar{u}}$ where $H(u, \bar{u})=\sum a_{j,l}u^{j}\bar{u}^{l}$, $a_{jl}=a_{lj}\in\mathbb{R}$, is real analytic. When $H$ is a polynomial, the proof follows the same line as described in this paper. When $H$ is real analytic instead of polynomial, combining the scheme in this paper and  (1) of Section 14 in \cite{bourgain1998quasi} gives the proof. 
\end{remark}
\begin{remark}
    The idea to handle the unbounded perturbation is to define non-singular sites to be $(n,k)$ which makes $\min\limits_{\pm}|D_{\pm, n,k}|=\min\limits_{\pm} |\pm k\omega+|n|^{2}+1|>N^{\delta}$ for some small $\delta$, which is different from the usual ones and allows us to handle the unbounded perturbation for $(\alpha<\delta)$.  Only when $\delta$ is small, can the existing separation lemma for $|n|^{2}$  handle it; thus,  $\alpha$ in this paper is small.
\end{remark}

We introduce some useful notations.

For any positive numbers $a, b$ the notation $a\lesssim b$ means $Ca\le b$ for some constant $C>0$. By $a\ll b$ we mean that the constant $C$ is very large.  The various constants can be defined by the context in which they arise. Finally, $N^{a-}$ means $N^{a-\epsilon}$ with some small $\epsilon>0$ (the precise meaning of ``small" can again be derived from the context).

\section{Initial approximate solution}

Choose and fix the various constants $c$, $M$, $C_{1}$, $C_{2}$, $C_{3}$ such that
\begin{equation}
    M>100,\ \ 0<c<\frac{\log \frac{17}{16}}{\log M},\ \ C_{1}=20d,\ \  C_{2}>2,\ \  C_{3}>C_{2}+2.
\end{equation}

Replacing $u$ by $\varepsilon^{\frac{1}{3}} u$, equation (\ref{peiodic force equation}) becomes
\begin{equation}
    iu_{t}-\Delta u+u+\varepsilon^{\frac{2}{3}} D^{\alpha}(|u|^{2} u)=\varepsilon^{\frac{2}{3}} P(x,\omega t).
    \label{scale}
\end{equation}
To solve a periodic solution with frequency $\omega$ of (\ref{scale}) is equivalent to solve
\begin{equation}
    i\partial_{\omega} u-\Delta u+u+\varepsilon^{\frac{2}{3}} D^{\alpha}(|u|^{2} u)=\varepsilon^{\frac{2}{3}} P(x,\theta),
    \label{scale'}
\end{equation}
where $\partial_{\omega} u=\omega\partial_{\theta}$. Let
\[
F(u)=i\partial_{\omega}u-\Delta u+u+\varepsilon^{\frac{2}{3}}D^{\alpha}(|u|^{2}u)-\varepsilon^{\frac{2}{3}}P(x,\theta).
\]
\begin{definition}
    For $f\in L^{2}(\mathbb{T}^{d+1})$, we denote
    \begin{equation}
        \Gamma_{N} f(x,\theta)=\sum\limits_{|n|<N, |k|<N} \hat{f}(n,k) e^{i n\cdot x+ik\theta}. 
    \end{equation}
\end{definition}
\begin{definition}
    We say $u(x,\theta)\in L^{2}(\mathbb{T}^{d+1})$ is an $O(\delta)$-approximate solution to (\ref{scale'}) if 
    \begin{equation}
        \lVert F(u) \rVert\le \delta,
    \end{equation}
    where $\lVert \cdot \rVert$ denote the  norm of $L^{2}(\mathbb{T}^{d+1})$.
\end{definition}

\begin{lemma}
    Let $j_{0}=\lfloor \frac{\log\log \frac{1}{\varepsilon}}{2\log M} \rfloor$. There exists $u_{j_{0}}(x,\theta;\omega)\in L^{2}(\mathbb{T}^{d+1})$ smoothly defined for $\omega\in [1,2]$ such that 
    \begin{itemize}
        \item[\textup{(i)}] $\textup{supp }\hat{u}_{j_{0}}\subset B(0,M^{j_{0}})$.
        \item[\textup{(ii)}] $\lVert u_{j_{0}} \rVert<\varepsilon^{\frac{1}{5}}$.
        \item[\textup{(iii)}] $|\hat{u}_{j_{0}}(\xi)|<e^{-|\xi|^{c}} $ and $|\partial \hat{u}_{j_{0}}(\xi)|<2 e^{-|\xi|^{c}}$, where $\xi=(n,k)\in\mathbb{Z}^{d+1}$ and $\partial$ refers to the derivative with respect to $\omega$. 
    \end{itemize}
    Furthermore, there exists a subset $\Lambda_{j_{0}}\subset [1,2]$ which is a union of disjoint intervals of size $\exp(-j_{0}^{C_{3}})$ such that 
    \[
    \textup{mes }[1,2]\setminus \Lambda_{j_{0}}\lesssim (\log \frac{1}{\varepsilon})^{-1}+e^{-j_{0}-9}.
    \]
    For $\omega\in\Lambda_{j_{0}}$, $u_{j_{0}}$ is an $O(e^{-2(M^{j_{0}})^{c}})$-approximate solution to (\ref{scale'}), i.e., $\lVert F(u_{j_{0}}) \rVert<e^{-2(M^{j_{0}})^{c}}$. Moreover, for  $\omega\in\Lambda_{j_{0}}$,  we also have
    \[
    \lVert \partial F(u_{j_{0}}) \rVert<e^{-2(M^{j_{0}})^{c}}.
    \]
    \label{lemma2.3}
\end{lemma}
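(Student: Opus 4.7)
The plan is to construct $u_{j_0}$ by a Newton iteration at geometric scales $N_j = M^j$, starting from $u_0 \equiv 0$ and refining up to $j = j_0$. The linearization of $F$ at $u = 0$ is the Fourier-diagonal operator $L = i\partial_\omega - \Delta + 1$ with symbol $D(n,k;\omega) = -k\omega + |n|^2 + 1$. Because the nonlinearity is cubic with prefactor $\varepsilon^{2/3}$ and I will maintain $\|u_j\| = O(\varepsilon^{1/5})$ along the iteration, the linearization $L_j$ at $u_j$ differs from $L$ by an operator of norm $O(\varepsilon^{16/15} N_j^{\alpha d})$ on $\Gamma_{N_j} L^2$. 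Hence on the Diophantine set $\Lambda_j = \{\omega \in [1,2] : |D(n,k;\omega)| \geq N_j^{-C_1} \text{ for all } |(n,k)| < N_j\}$, the operator $L_j$ is invertible on $\Gamma_{N_{j+1}} L^2$ via a Neumann series with $\|L_j^{-1}\| \lesssim N_j^{C_1}$.

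The inductive step writes $u_{j+1} = u_j + v_j$ with $L_j v_j = -\Gamma_{N_{j+1}} F(u_j)$. Since $u_j$ has Fourier support in $B(0, N_j)$ and the cubic nonlinearity enlarges this support by at most a factor of $3$, the only piece of $F(u_j)$ that escapes $B(0, N_{j+1})$ when $M \geq 3$ is the Gevrey tail $-\varepsilon^{2/3}(I - \Gamma_{N_{j+1}}) P$, which by $\|P\|_c < 1$ has $L^2$-mass $\lesssim \varepsilon^{2/3} e^{-2 N_{j+1}^c}$. This gives
\[
F(u_{j+1}) = -\varepsilon^{2/3}(I - \Gamma_{N_{j+1}}) P + \varepsilon^{2/3} D^\alpha\bigl(3|u_j|^2 v_j + 3 \bar{u}_j v_j^2 + |v_j|^2 v_j\bigr),
\]
whose two pieces are both bounded by $e^{-2 N_{j+1}^c}$ once $M^c < 17/16$ so that the quadratic Newton gain $N_j^{2 C_1} \|F(u_j)\|^2$ dominates the next target. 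The Gevrey bound (iii) closes in parallel: each $v_j$ loses only $N_j^{C_1}$ off the Gevrey norm of $F(u_j)$, and the starting decay exponent $2|\xi|^c$ coming from $\|P\|_c < 1$ provides enough headroom to reach the target exponent $|\xi|^c$ for $\hat{u}_{j_0}$ after $j_0$ iterations.

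For the measure, each elementary exclusion $\{|D(n,k;\omega)| < N_j^{-C_1}\}$ is a union of $O(1)$ intervals of total length $\lesssim N_j^{-C_1}/|k|$; summing over $|(n,k)| < N_j$ gives $\lesssim N_j^{d-C_1} \log N_j$ at step $j$, and summing over $j \leq j_0$ yields the main $(\log(1/\varepsilon))^{-1}$ contribution because $C_1 = 20d \gg d$. To enforce the interval structure ``disjoint intervals of size $\exp(-j_0^{C_3})$'' in $\Lambda_{j_0}$, I would then discard every surviving connected component of length below $\exp(-j_0^{C_3})$; since at most $\lesssim M^{j_0(d+1)}$ such components exist, the additional cost is $M^{j_0(d+1)} \exp(-j_0^{C_3}) \ll e^{-j_0 - 9}$ because $C_3 > C_2 + 2$ makes $j_0^{C_3}$ dominate any polynomial in $j_0 \log M$, using $j_0 \approx \log\log(1/\varepsilon)/(2\log M)$.

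The main obstacle will be the uniform $\omega$-derivative bound in (iii) and the matching estimate $\|\partial F(u_{j_0})\| < e^{-2(M^{j_0})^c}$. Differentiating $L_j^{-1}$ in $\omega$ produces a factor $k/D^2$, losing an extra $N_j^{1+C_1}$ per derivative, and this loss propagated across all $j_0$ iterations must still be absorbed by the super-exponential gain from the Newton recursion and from the Gevrey tail of $P$. The fine choices $M^c < 17/16$, $C_2 > 2$, $C_3 > C_2 + 2$ are tuned precisely so that this derivative loss fits inside the target decay $e^{-2(M^{j_0})^c}$ and leaves the small prefactor $2$ in the bound on $|\partial \hat{u}_{j_0}(\xi)|$; once this delicate bookkeeping is verified, properties (i)--(iii), the measure bound on $[1,2]\setminus \Lambda_{j_0}$, and both residual estimates follow at once.
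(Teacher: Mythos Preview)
Your proposal takes a genuinely different and much heavier route than the paper. The paper does not iterate at all to build $u_{j_0}$: it performs a \emph{single} linear solve
\[
\hat{u}_{j_0}(n,k) = \frac{\varepsilon^{2/3}\hat{P}(n,k)}{-k\omega + |n|^2 + 1}, \qquad |n|,|k| < M^{j_0},
\]
on the Diophantine set $|-k\omega + |n|^2 + 1| \ge (\log\frac{1}{\varepsilon})^{-1}(1+|k|)^{-\tau}$. Since then $u_{j_0}=O(\varepsilon^{2/3-})$, the cubic term $\varepsilon^{2/3}D^\alpha(|u_{j_0}|^2 u_{j_0})$ is already $O(\varepsilon)$, and the whole point of the choice $j_0=\lfloor\frac{\log\log(1/\varepsilon)}{2\log M}\rfloor$ is that $M^{j_0}\approx(\log\frac{1}{\varepsilon})^{1/2}$, so $(M^{j_0})^c\ll\log\frac{1}{\varepsilon}$ and hence $\varepsilon\ll e^{-2(M^{j_0})^c}$. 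The nonlinearity is therefore below the target residual after one step; the only other error is the Gevrey tail of $P$. What you call the ``main obstacle'' (the $\omega$-derivative losing a factor $k/D^2$ per inversion, accumulated over $j_0$ steps) simply does not arise: there is one inversion and one such factor, easily absorbed into $\varepsilon^{2/3-}$.

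Beyond being overkill, your sketch has two concrete problems. Your cubic expansion $3|u_j|^2 v_j + 3\bar u_j v_j^2$ is the real-$u$ formula; for $|u|^2 u = u^2\bar u$ the linearization is $2|u_j|^2 v_j + u_j^2\bar v_j$ and the problem becomes a $2\times 2$ system in $(v,\bar v)$. More seriously, your measure estimate does not give the stated bound: $\sum_{j\le j_0} N_j^{\,d-C_1}\log N_j$ with $N_j=M^j$ and $C_1>d$ is a convergent geometric series whose sum is a constant depending only on $M$, \emph{independent of $\varepsilon$}, so the excluded set does not shrink as $\varepsilon\to 0$ and the conclusion $\textup{mes}\,[1,2]\setminus\Lambda_{j_0}\lesssim(\log\frac{1}{\varepsilon})^{-1}+e^{-j_0-9}$ fails. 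The paper instead takes the small-divisor threshold to be $(\log\frac{1}{\varepsilon})^{-1}(1+|k|)^{-\tau}$, which directly produces the $(\log\frac{1}{\varepsilon})^{-1}$ measure bound and is still large enough for the single inversion.
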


\begin{proof}
    Partition $[1,2]$ into intervals $\{I_{\zeta}\}$ of size $\frac{1}{1-e^{-j_{0}-9}}\exp(-j_{0}^{C_{3}})$. Let 
    \begin{align}
         &\Lambda_{j_{0}}\notag\\
         =&\bigcup \{ (1-e^{-j_{0}-9})I_{\zeta}: \exists\  \omega'\in I_{\zeta} \textup{ such that } 
         |-k\omega'+|n|^{2}+1|\ge (\log \frac{1}{\varepsilon})^{-1}(1+|k|)^{-\tau},\notag\\
         &\ \ \  \ \ \      |n|<M^{j_{0}}, |k|<M^{j_{0}}\}.\notag
    \end{align}
    Here, $\tau=d+1$ and $(1-e^{-j_{0}-9})I_{\zeta}$ refers to $(1-e^{-j_{0}-9})$-dilation of $I_{\zeta}$.
    For $\omega\in \Lambda_{j_{0}}$, we have 
    \begin{align}
        |-k\omega+|n|^{2}+1|&>(\log\frac{1}{\varepsilon})^{-1}(1+|k|)^{-\tau}-\exp(-\frac{1}{2}j_{0}^{C_{3}})\notag\\
        &>\frac{1}{2} (\log \frac{1}{\varepsilon})^{-1}(1+|k|)^{-\tau}.\notag
    \end{align}
    Thus, we have
    \begin{align}
        &[1,2]\setminus \Lambda_{j_{0}}\\
        \subset& \mathop{\cup}\limits_{\substack{|n|<M^{j_{0}}\\|k|<M^{j_{0}}}}\{\omega\in[0,1]:|-k\omega+|n|^{2}+1|<(\log \frac{1}{\varepsilon})^{-1}(1+|k|)^{-\tau}\}\cup(\mathop{\cup}\limits_{\zeta} I_{\zeta}\setminus (1-e^{-j_{0}-9})I_{\zeta}).\notag
    \end{align}
    Denote 
    \begin{equation}
        R_{n,k}=\{\omega\in[1,2]: |-k\omega+|n|^{2}+1|<(\log\frac{1}{\varepsilon})^{-1}(1+|k|)^{-\tau}\}.
    \end{equation}
    Note that $R_{n,0}=\emptyset$ and $R_{n,k}=\emptyset$ for $|n|>2|k|$.
    Thus, we have 
    \begin{align}
        \textup{mes } [1,2]\setminus \Lambda_{j_{0}}&< \sum\limits_{\substack{|k|<M^{j_{0}}\\k\ne 0}} \sum\limits_{n\le 2|k|} \textup{mes } R_{n,k}+e^{-j_{0}-9}\notag\\
&\lesssim \sum\limits_{\substack{|k|<M^{j_{0}}\\k\ne 0}} (\log\frac{1}{\varepsilon})^{-1}(1+|k|)^{-d-1}\cdot |k|^{d-1}+e^{-j_{0}-9}\notag\\
&\lesssim (\log \frac{1}{\varepsilon})^{-1}+e^{-j_{0}-9}.\notag
    \end{align}
For $\omega\in\Lambda_{j_{0}}$, let 
\[
\hat{u}_{j_{0}}(n,k)=
\begin{cases}
\frac{\varepsilon^{\frac{2}{3}}\hat{P}(n,k)}{-k\omega+|n|^{2}+1}, \ &|k|<M^{j_{0}}, |n|<M^{j_{0}},\\
0, &\textup{otherwise}.
\end{cases}
\]
We have 
\begin{align}
&|\hat{u}_{j_{0}}(n,k)|<\varepsilon^{\frac{2}{3}-}e^{-\frac{3}{2}(|n|+|k|)^{c}},\notag\\
&|\partial\hat{u}_{j_{0}}(n,k)|<\varepsilon^{\frac{2}{3}-}e^{-\frac{3}{2}(|n|+|k|)^{c}}.\notag
\end{align}
By the definition of $\hat{u}_{j_{0}}$ and Plancherel's Theorem, we have
\begin{align}
&\lVert F(u_{j_{0}}) \rVert\notag\\
=&\lVert i\partial_{\omega}u_{j_{0}}-\Delta u_{j_{0}}+u_{j_{0}}+\varepsilon^{\frac{2}{3}} D^{\alpha} (|u_{j_{0}}|^{2}u_{j_{0}})-\varepsilon^{\frac{2}{3}}P(x,\theta)\rVert\notag\\
=&\lVert \varepsilon^{\frac{2}{3}}D^{\alpha}(|u_{j_{0}}|^{2}u_{j_{0}})-\varepsilon^{\frac{2}{3}}(1-\Gamma_{M^{j_{0}}}) P \rVert\notag\\
\le & \lVert \varepsilon^{\frac{2}{3}} D^{\alpha}(|u_{j_{0}}|^{2}u_{j_{0}}) \rVert+\lVert \varepsilon^{\frac{2}{3}}(1-\Gamma_{M^{j_{0}}}) P \rVert\notag\\
\le & \varepsilon+\varepsilon^{\frac{2}{3}}\exp(-(2-)(M^{j_{0}})^{c})\notag\\
\le &\exp(-2(M^{j_{0}})^{c}).
\end{align}
Moreover, we have
\begin{align}
    &\lVert \partial F(u_{j_{0}}) \rVert\notag\\
    =&\lVert \partial \varepsilon^{\frac{2}{3}}D^{\alpha}(|u_{j_{0}}|^{2}u_{j_{0}})-\partial\varepsilon^{\frac{2}{3}}(1-\Gamma_{M^{j_{0}}}) P \rVert\notag\\
    =&\lVert \partial \varepsilon^{\frac{2}{3}}D^{\alpha}(|u_{j_{0}}|^{2}u_{j_{0}})\rVert\notag\\
    \le & \varepsilon<\exp(-2(M^{j_{0}})^{c}).
\end{align}
By Whitney extension theorem, we can extend $\hat{u}_{j_{0}}(n,k)$ to the whole parameter space $[1,2]$ and we have
\begin{equation}
    \lVert \hat{u}_{j_{0}}(n,k) \rVert_{C^{1}([1,2])}<\varepsilon^{\frac{2}{3}-}e^{-(|n|+|k|)^{c}}.
\end{equation}
This completes the proof.
\end{proof}

\section{The Newton iteration scheme}
Assume that $u_{j}(x,\theta;\omega)\in L^{2}(\mathbb{T}^{d+1})$ $ (j\ge j_{0})$ is defined smoothly for $\omega\in[1,2]$. Moreover, assume that
\begin{itemize}
    \item[(j-i)] $\textup{supp } \hat{u}_{j}\subset B(0,M^{j})$.
    \item[(j-ii)] $\lVert u_{j} \rVert<(1+\sum\limits_{j'=j_{0}}^{j}\frac{1}{j'^{2}})\varepsilon^{1/5}$.
    \item[(j-iii)] $| \hat{u}_{j}(\xi)|< (1+\sum\limits_{j'=j_{0}}^{j}\frac{1}{j'^{2}}) e^{-|\xi|^{c}}$ and $|\partial \hat{u}_{j}(\xi)|< 2(1+\sum\limits_{j'=j_{0}}^{j}\frac{1}{j'^{2}}) e^{-|\xi|^{c}}$, where  $\xi=(n,k)\in\mathbb{Z}^{d+1}$, and $\partial$ refers to the derivative with respect to $\omega$.
    \item[(j-iv)] $\lVert F(u_{j})\rVert< e^{-2 (M^{j})^{c}}$ and $\lVert \partial F(u_{j})\rVert<  e^{-2 (M^{j})^{c}}$ are valid for $\omega$ restricted to a subset $\Lambda_{j}$ of $\omega$-parameter set which, in particular, we assume to be a union of disjoint intervals of size $\exp (-j^{C_{3}})$. 
\end{itemize}
 Let $u=u_{j}+v$, then we have
 \begin{align}
     F(u)
    &=i\partial_{\omega}u_{j}+i\partial_{\omega}v-\Delta u_{j}-\Delta v +u_{j}+v+\varepsilon^{\frac{2}{3}}D^{\alpha}\left( (u_{j}+v)^{2}(\bar{u}_{j}+\bar{v}) \right)-\varepsilon^{\frac{2}{3}} P\notag\\
    &=F(u_{j})+i\partial_{\omega}v-\Delta v+v+ \varepsilon^{\frac{2}{3}} D^{\alpha}(2 u_{j}\bar{u}_{j} v+u_{j}^{2}\bar{v})+\varepsilon^{\frac{2}{3}}D^{\alpha}(2u_{j}v\bar{v}+\bar{u}_{j}v^{2}+v^{2}\bar{v}).\label{zhenzhengdedongxi}
 \end{align}
Thus, we have the linearized equation 
\begin{equation}
    i\partial_{\omega}v-\Delta v+v+\varepsilon^{\frac{2}{3}}D^{\alpha}(2 u_{j}\bar{u}_{j} v+u_{j}^{2}\bar{v})=-F(u_{j}).
    \label{linearized equation}
\end{equation}
The equation conjugate to (\ref{linearized equation}) is
\begin{equation}
    -i \partial_{\omega}\bar{v}-\Delta \bar{v}+\bar{v}+\varepsilon^{\frac{2}{3}}D^{\alpha}(\bar{u}_{j}^{2}v+2u_{j}\bar{u}_{j} \bar{v})=-\overline{F}(u_{j}). 
\end{equation}
Let 
$$
w=\begin{pmatrix}
    v\\
    \bar{v}
\end{pmatrix},\ \ 
Q_{j}=\begin{pmatrix}
    -F(u_{j})\\
    -\overline{F}(u_{j})
\end{pmatrix}.
$$
Then, the homological equation is 
\begin{equation}
    \begin{pmatrix}
        i & 0\\
        0 & -i
    \end{pmatrix}\partial_{\omega} w-\Delta w+w+\varepsilon^{\frac{2}{3}} D^{\alpha} \begin{pmatrix}
        2u_{j}\bar{u}_{j} & u_{j}^{2}\\
        \bar{u}_{j}^{2} & 2u_{j}\bar{u}_{j}
    \end{pmatrix}w=Q_{j}.
\end{equation}
Passing to Fourier coefficients, we get a lattice system of equations:
\begin{equation}
    (D+\varepsilon^{\frac{2}{3}}\Lambda S_{j})\widehat{w}=\widehat{Q}_{j},
\end{equation}
where  
\begin{equation}
    D=\begin{pmatrix}
        -k\omega+|n|^{2}+1 & 0\\
        0 & k\omega+|n|^{2}+1
    \end{pmatrix},
\end{equation}
\begin{equation}
    \Lambda=\begin{pmatrix}
        k\cdot 0+\langle n\rangle^{\alpha} & 0\\
        0 & k\cdot 0+\langle n\rangle^{\alpha}
    \end{pmatrix},
\end{equation}
and
\begin{equation}
    S_{j}=\begin{pmatrix}
        S_{2 u_{j}\bar{u}_{j}} & S_{u_{j}^{2}}\\
        S_{ \bar{u}_{j}^{2}} & S_{2u_{j}\bar{u}_{j}}
    \end{pmatrix},
\end{equation}
where $S_{\phi}$ represents the Toeplitz operator corresponding to $\phi$ (i.e., $S_{\phi}((n,k),(n,k))=\hat{\phi}(n-n',k-k')$).  The index of the matrix $T_{j}=D+\varepsilon^{\frac{2}{3}}\Lambda S_{j}$ is $(\pm, n, k)\in \mathbb{Z}_{2}\times\mathbb{Z}^{d+1}$. Denote $T_{j,N}=T_{j}|_{|n|,|k|<N}$.

\begin{lemma}
    Let $N=M^{j+1}$. Under assumptions \textup{(j-i), (j-ii), (j-iii), (j-iv)}. Assume that we have confirmed  
    \begin{equation}
    \lVert T_{j,N}^{-1} \rVert<2\exp(\log N)^{C_{2}}:=B(N),
    \label{l2bound}
    \end{equation}
    and
    \begin{equation}
    |T_{j,N}^{-1}(\xi,\xi')|<2e^{-\frac{1}{2}|\xi-\xi'|^{c}} \textup{ for } |\xi-\xi'|>N^{\frac{1}{2}},
    \label{offdiagonalestimate}
    \end{equation}
    for $\omega$ restricted to a subset $\Lambda_{j}'\subset\Lambda_{j}$, which we assume to be a union of disjoint intervals of size $\frac{1}{1-e^{-j-10}} \exp(-(j+1)^{C_{3}})$. Then there exist a subset $\Lambda_{j+1}\subset\Lambda_{j}'$ and a function $u_{j+1}(x,\theta;\omega)\in L^{2}(\mathbb{T}^{d+1})$ defined smoothly for $\omega\in [1,2]$ such that
    \begin{itemize}
            \item[\textup{((j+1)-i)}] $\textup{supp } \hat{u}_{j+1}\subset B(0,M^{j+1})$.
            \item[\textup{((j+1)-ii)}] $\lVert u_{j+1} \rVert<(1+\sum\limits_{j'=j_{0}}^{j+1}\frac{1}{j'^{2}})\varepsilon^{1/5}$.
            \item[\textup{((j+1)-iii)}] $| \hat{u}_{j+1}(\xi)|< (1+\sum\limits_{j'=j_{0}}^{j+1}\frac{1}{j'^{2}}) e^{-|\xi|^{c}}$ and $|\partial \hat{u}_{j+1}(\xi)|< 2(1+\sum\limits_{j'=j_{0}}^{j+1}\frac{1}{j'^{2}}) e^{-|\xi|^{c}}$, where  $\xi=(n,k)\in\mathbb{Z}^{d+1}$, and $\partial$ refers to the derivative with respect to $\omega$.
            \item[\textup{((j+1)-iv)}] $\lVert F(u_{j+1})\rVert< e^{-2 (M^{j+1})^{c}}$ and $\lVert \partial F(u_{j+1})\rVert<  e^{-2 (M^{j+1})^{c}}$ are valid for $\omega$ restricted to  $\Lambda_{j+1}$ which is a union of disjoint intervals of size $\exp (-(j+1)^{C_{3}})$.
            \item[\textup{((j+1)-v)}] $\lVert \partial^{\beta}(u_{j+1}-u_{j}) \rVert<e^{-\frac{11}{8}(M^{j+1})^{c}}$, where $\beta=0,1$.
            \item[\textup{((j+1)-vi)}] $\textup{mes } (\Lambda_{j}'\setminus \Lambda_{j+1})\lesssim e^{-j-10}$.  
    \end{itemize}
    \label{lemma3.1}
\end{lemma}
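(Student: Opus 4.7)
The plan is to perform one Newton iteration step: set $u_{j+1}=u_j+v$, where $v$ solves the truncated linearized equation, and define $\Lambda_{j+1}$ by slightly shrinking the intervals of $\Lambda_j'$. Concretely, for $|\xi|<N=M^{j+1}$ put $\hat v|_{<N}=-T_{j,N}^{-1}\widehat{F(u_j)}|_{<N}$, and $\hat v(\xi)=0$ for $|\xi|\ge N$; property ((j+1)-i) follows at once from (j-i) and this support constraint.

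For the size bounds, the $L^2$ hypothesis (\ref{l2bound}) together with (j-iv) gives $\lVert v\rVert\le B(N)\,e^{-2(M^j)^c}$. Because $c<\log(17/16)/\log M$ implies $M^c<17/16$, one has a gap $2(M^j)^c-(11/8)(M^{j+1})^c\gtrsim (M^j)^c$, which swallows the sub-exponential loss $B(N)=2\exp((\log N)^{C_2})$ for all $j\ge j_0$, yielding $\lVert v\rVert<e^{-(11/8)(M^{j+1})^c}$, i.e.\ ((j+1)-v) for $\beta=0$, and combined with (j-ii) also ((j+1)-ii). For $\beta=1$, differentiating $T_{j,N}v=-F(u_j)|_{<N}$ gives $\partial v=-T_{j,N}^{-1}(\partial F(u_j)+(\partial T_{j,N})v)$; with $\lVert\partial T_{j,N}\rVert\lesssim N$ (from $\partial D=-k$ and (j-iii)), the same Gevrey-gap argument absorbs the amplification $NB(N)^2$. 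For the pointwise bound in ((j+1)-iii) I would use the standard coupling argument: split $\hat v(\xi)=-\sum_{\xi'}T_{j,N}^{-1}(\xi,\xi')\widehat{F(u_j)}(\xi')$ at the radius $|\xi-\xi'|=N^{1/2}$; the far piece is controlled by the off-diagonal decay (\ref{offdiagonalestimate}), while the close piece is controlled by Cauchy--Schwarz against the $L^2$ bound, using the Gevrey smallness of $\widehat{F(u_j)}$ near $\xi$ inherited from (j-iii) and the Gevrey norm of $P$. The output is the inductive gain $|\hat v(\xi)|\le (j+1)^{-2}e^{-|\xi|^c}$, and likewise for $\partial\hat v$, giving ((j+1)-iii).

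For the residual, identity (\ref{zhenzhengdedongxi}) gives
\[
F(u_{j+1})=F(u_j)+\mathcal L_j v+\varepsilon^{2/3}D^{\alpha}(2u_j v\bar v+\bar u_j v^2+v^2\bar v),
\]
where $\mathcal L_j$ is the operator whose Fourier matrix is $T_j$. Since $u_j$ is Fourier-supported in $B(0,M^j)$, the matrix $T_j$ is $2M^j$-finite-range, so for $|\xi|<N-2M^j$ the definition of $v$ forces the exact cancellation $\widehat{F(u_j)}(\xi)+(T_j\hat v)(\xi)=0$. The remaining terms are (a) a boundary-shell error at $|\xi|\sim N$ controlled by the Gevrey tail of $\hat v$; (b) the tail of $\widehat{F(u_j)}$ at $|\xi|\ge N$, which reduces to $-\varepsilon^{2/3}\hat P(\xi)$ since $|u_j|^2u_j$ is Fourier-supported in $B(0,3M^j)$, and is Gevrey-small; and (c) the cubic remainder of size $\varepsilon^{2/3}(\lVert u_j\rVert\lVert v\rVert^2+\lVert v\rVert^3)$, whose quadratic-in-$v$ character yields $e^{-4(M^j)^c}$, comfortably below $e^{-2(M^{j+1})^c}$ because $M^c<2$. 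Thus $\lVert F(u_{j+1})\rVert<e^{-2(M^{j+1})^c}$, and the same scheme applied to the differentiated identity gives the $\partial F$ bound. Finally I would obtain $\Lambda_{j+1}$ by shrinking each interval of $\Lambda_j'$ from size $(1-e^{-j-10})^{-1}\exp(-(j+1)^{C_3})$ down to size exactly $\exp(-(j+1)^{C_3})$, a relative loss of $e^{-j-10}$ per interval, yielding ((j+1)-vi); the discarded margin is what creates room for the Whitney extension at the next step.

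The main obstacle is the coupling step converting the crude $L^2$ bound and the off-diagonal estimate into Gevrey-sharp pointwise decay of $\hat v$, carrying the uniform-in-$j$ factor $(j+1)^{-2}$ without which the Gevrey envelope $\prod(1+j'^{-2})$ would not remain bounded through infinitely many Newton steps. A secondary delicate point is the derivative estimate, where the amplification $NB(N)^2$ from differentiating $T_{j,N}^{-1}$ must be absorbed into the super-exponential gain; both issues are resolved by the smallness of $c$ relative to $1/\log M$, which produces the decisive gap $M^c<17/16$.
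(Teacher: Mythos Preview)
Your proposal is correct and follows the same Newton-step scheme as the paper. Two minor points worth noting: the coupling argument you sketch for ((j+1)-iii) is not needed, since the paper simply uses the trivial bound $|\hat v(\xi)|\le\lVert v\rVert<e^{-\frac{3}{2}(M^{j+1})^c}$, which already beats $(j+1)^{-2}e^{-|\xi|^c}$ for every $|\xi|<N$; and the smooth extension of $v$ to all of $[1,2]$ via cutoffs $\psi_\zeta$ (equal to $1$ on the shrunk intervals $\Lambda_{j+1}$, vanishing outside the intervals of $\Lambda_j'$) is carried out at \emph{this} step rather than deferred, with the derivative loss $|\partial\psi_\zeta|<e^{2(j+1)^{C_3}}$ absorbed by the same Gevrey gap that absorbs $B(N)$.
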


\begin{proof}
For simplicity, we omit the subscript $j$ of $S_{j}$ and $T_{j}$.
    By the definition of $S$ and assumptions, we have 
\begin{equation}
    |S((\pm,n,k), (\pm, n', k'))| <  e^{-(1-)(|n-n'|+|k-k'|)^{c}},
\end{equation}
and
\begin{equation}
    |\partial S((\pm,n,k), (\pm, n', k'))|<  e^{-(1-)(|n-n'|+|k-k'|)^{c}}.
\end{equation}
 We omit the index $\pm$.
Thus, we have
\begin{equation}
    |T(\xi,\xi')|<\varepsilon^{2/3} \langle n\rangle^{\alpha} e^{-(1-)|\xi-\xi'|^{c}}, \textup{ for } \xi\ne\xi', \xi=(n,k),
    \label{shuaijianyuan}
\end{equation}
and
\begin{equation}
    |\partial T(\xi,\xi')|<\varepsilon^{2/3} \langle n\rangle^{\alpha} e^{-(1-)|\xi-\xi'|^{c}},  \textup{ for } \xi\ne\xi', \xi=(n,k).
    \label{shuajiandaoshu}
\end{equation}
Let $\omega\in\Lambda_{j}'$.
Since one has
\begin{equation}
    \partial T_{N}^{-1}=-T_{N}^{-1} (\partial T_{N}) T_{N}^{-1},
\end{equation}
it follows that
\begin{align}
    \lVert \partial T_{N}^{-1} \rVert
    &\le \lVert T_{N}^{-1} \rVert^{2} \lVert \partial T_{N} \rVert\notag\\
    &<B(N)^{2} N <e^{3(\log N)^{C_{2}}}.
\end{align}
When $|\xi-\xi'|>N^{3/4}$, we have
\begin{align}
    |(\partial T_{N}^{-1})(\xi,\xi')|&\le \sum\limits_{|\xi_{1}|,|\xi_{2}|<N}\left| T_{N}^{-1}(\xi,\xi_{1})  \right| \left| \partial T_{N}(\xi_{1},\xi_{2}) \right| \left| T_{N}^{-1}(\xi_{2}, \xi') \right|\notag\\
        &\le N^{2(d+1)+3} e^{5(\log N)^{C_{2}}} e^{-\frac{1}{2}(|\xi-\xi'|-3 N^{1/2})^{c}}\notag\\
        &\le e^{-(\frac{1}{2}-)|\xi-\xi'|^{c}}.\label{offdiagonaldecayfordaoshu}
\end{align}
Since $\hat{Q}_{j}$ is expressed as a polynomial in $\hat{u}_{j}$ and $\hat{\bar{u}}_{j}$, and given that $\textup{supp }\hat{u}_{j}\subset B(0, M^{j})$, we get
\begin{equation}
    \textup{supp } \hat{Q}_{j}\subset B(0, C M^{j})\subset B(0,\frac{1}{4}M^{j+1}).
\end{equation}
Let 
\begin{equation}
    \hat{w}_{j}=\begin{pmatrix}
    \hat{v}_{j}\\
    \hat{\bar{v}}_{j}
\end{pmatrix}=T_{N}^{-1} \hat{Q}_{j},
\label{solutionineachstep}
\end{equation}
and 
$$
u_{j+1}=u_{j}+v_{j}.
$$
It is straightforward that 
\begin{equation}
    \textup{supp } \hat{v}_{j}\subset B(0, M^{j+1}).
    \label{zhiji}
\end{equation}
By (\ref{solutionineachstep}), we have
\begin{align}
    \lVert v_{j} \rVert&\le \lVert T_{N}^{-1} \rVert \lVert Q_{j} \rVert \notag\\
        &\le \exp(\log N)^{C_{2}} e^{-2(M^{j})^{c}}\notag\\
        &<e^{-(2-) (M^{j})^{c}}<e^{-\frac{3}{2} (M^{j+1})^{c}},\label{feidaoshuguji}
\end{align}
provided that $c<\frac{\log \frac{4}{3}}{\log M}$.
Moreover, we have
\begin{align}
    \lVert \partial v_{j} \rVert &\le \lVert \partial T_{N}^{-1} \rVert \lVert Q_{j} \rVert+\lVert T_{N}^{-1} \rVert \lVert \partial Q_{j} \rVert\notag\\
    &\le e^{3(\log N)^{C_{2}}}e^{-2 (M^{j})^{c}}+e^{(\log N)^{C_{2}}} e^{-2(M^{j})^{c}}\notag\\
    &<e^{-(2-) (M^{j})^{c}}<e^{-\frac{3}{2} (M^{j+1})^{c}}.\label{daoshuguji}
\end{align}
Thus, (\ref{zhiji}), (\ref{feidaoshuguji}) and (\ref{daoshuguji}) permit us to confirm ((j+1)-i), ((j+1)-ii), ((j+1)-iii), ((j+1)-v) for $u_{j+1}=u_{j}+v_{j}$ provided the $\omega$-parameters are restricted to $\Lambda_{j}'$. 

By (\ref{zhenzhengdedongxi}), we have
\begin{equation}
    \begin{split}
        \begin{pmatrix}
            F(u_{j+1})\\
            \overline{F}(u_{j+1})
        \end{pmatrix}
        =\left((T-T_{N})
        \hat{w}_{j} \right)^{\vee}
        +\varepsilon^{2/3}\begin{pmatrix}
            D^{\alpha} (2 u_{j} v_{j} \bar{v}_{j}+\bar{u}_{j}v_{j}^{2}+v_{j}^{2}\bar{v}_{j})\\
            D^{\alpha} (2 \bar{u}_{j} v_{j} \bar{v}_{j}+u_{j}\bar{v}_{j}^{2}+v_{j}\bar{v}_{j}^{2})
        \end{pmatrix},
    \end{split}
\end{equation}
where $\hat{w}_{j}=\begin{pmatrix}
        \hat{v}_{j}\\
        \hat{\bar{v}}_{j}
        \end{pmatrix}$.
For the second term, we have
\begin{equation}
    \lVert D^{\alpha} (2 u_{j} v_{j} \bar{v}_{j}+\bar{u}_{j}v_{j}^{2}+v_{j}^{2}\bar{v}_{j}) \rVert < e^{-(3-)(M^{j+1})^{c}}
\end{equation}
and 
\begin{equation}
    \begin{split}
        \lVert \partial D^{\alpha} (2 u_{j} v_{j} \bar{v}_{j}+\bar{u}_{j}v_{j}^{2}+v_{j}^{2}\bar{v}_{j}) \rVert
        = \lVert D^{\alpha} \partial (2 u_{j} v_{j} \bar{v}_{j}+\bar{u}_{j}v_{j}^{2}+v_{j}^{2}\bar{v}_{j}) \rVert < e^{-(3-)(M^{j+1})^{c}}.
    \end{split}
\end{equation}

Now we consider the first term. Denote $P_{K}$ the projection on $B(0,K)$. We have
\begin{align}
    (T-T_{N})\hat{w}_{j}
    &=(I-P_{N})T P_{\frac{N}{2}}\hat{w}_{j} +(T-T_{N})(\hat{w}_{j}-P_{\frac{N}{2}}\hat{w}_{j})\notag\\
    &=(I-P_{N})T P_{\frac{N}{2}}\hat{w}_{j}+(T-T_{N})(I-P_{\frac{N}{2}})T_{N}^{-1} \hat{Q}_{j}\notag\\
    &=(I-P_{N})T P_{\frac{N}{2}}\hat{w}_{j}+(T-T_{N})(I-P_{\frac{N}{2}})T_{N}^{-1} P_{\frac{N}{4}}\hat{Q}_{j}.
\end{align}
By (\ref{shuaijianyuan}), (\ref{shuajiandaoshu}), (\ref{feidaoshuguji}) and (\ref{daoshuguji}), we obtain
\begin{align}
    \lVert \partial^{\beta} (I-P_{N})T P_{\frac{N}{2}}\hat{w}_{j} \rVert
    &< N^{\alpha} e^{-(1-)(\frac{1}{2} N)^{c}} e^{-\frac{3}{2} (M^{j+1})^{c}}\notag\\
    &<\frac{1}{3} e^{-\frac{1}{2} N^{c}}e^{-\frac{3}{2} (M^{j+1})^{c}}=\frac{1}{3} e^{-2 (M^{j+1})^{c}},
\end{align}
for $\beta=0,1$.
By (j-iv), (\ref{offdiagonalestimate}) and (\ref{offdiagonaldecayfordaoshu}), we get
\begin{align}
     \lVert \partial^{\beta} (T-T_{N})(I-P_{\frac{N}{2}})T_{N}^{-1} P_{\frac{N}{4}}\hat{Q}_{j} \rVert
        &< e^{-(\frac{1}{2}-)(\frac{1}{4} N)^{c}} e^{-2(M^{j})^{c}}\notag\\
        &\le \frac{1}{3}e^{-\frac{17}{8} (M^{j})^{c}}\notag\\
        &\le \frac{1}{3}e^{-2 (M^{j+1})^{c}},
\end{align}
for $\beta=0,1$, provided that $c<\frac{\log \frac{17}{16}}{\log M}$.
Hence, we obtain
\begin{equation}
    \lVert \partial^{\beta} (T-T_{N})\hat{w}_{j} \rVert<\frac{2}{3}e^{-2 (M^{j+1})^{c}}, 
    \label{j+1buiv}
\end{equation}
for $\beta=0,1$ and $\omega\in\Lambda_{j}'$. Thus, we have
\begin{equation}
    \lVert  \partial^{\beta} F(u_{j+1})  \rVert<e^{-2(M^{j+1})^{c}},
\end{equation}
for $\beta=0,1$ and $\omega\in\Lambda_{j}'$.
Note that in the above $v_{j}$ is defined on $\Lambda_{j}'$. We need to extend its definition to the entire $\omega$-parameter set $[1,2]$. Assume 
\begin{equation}
    \Lambda_{j}' =\mathop{\cup}_{\zeta} I_{\zeta},
\end{equation}
where  $\{I_{\zeta}\}$ are disjoint intervals of size $\frac{1}{1-e^{-j-10}}\exp(-(j+1)^{C_{3}})$. For each $\zeta$, denote $I_{\zeta}'\subset I_{\zeta}$ the $(1-e^{-j-10})$-dilation of $I_{\zeta}$ with the same center. Let $0\le \psi_{\zeta}\le 1$ be a smooth function satisfying
\begin{align}
    &\psi_{\zeta}=0 \textup{ outside } I_{\zeta}, \ \psi_{\zeta}=1 \textup{ on } I_{\zeta}',\\
    &|\partial \psi_{\zeta}|<\exp(2(j+1)^{C_{3}}).
\end{align}
Let 
\begin{equation}
    \tilde{v}_{j}=\sum\limits_{\zeta} \psi_{\zeta} v_{j},
\end{equation}
then we have
\begin{align}
&\textup{supp } \hat{\tilde{v}}_{j}\subset B(0, M^{j+1}), \label{zhijiyantuo}\\
&\lVert \partial^{\beta} \tilde{v}_{j} \rVert < e^{-(\frac{3}{2}-)(M^{j+1})^{c}}, \textup{ for } \beta=0,1.\label{gujiyantuo}
\end{align}
Let $u_{j+1}=u_{j}+\tilde{v}_{j}$.  Thus, (\ref{zhijiyantuo}) and (\ref{gujiyantuo}) permit us to confirm ((j+1)-i), ((j+1)-ii), ((j+1)-iii), ((j+1)-v). Let $\Lambda_{j+1}=\mathop{\cup}\limits_{\zeta} I_{\zeta}'$, then ((j+1)-iv) is confirmed. Moreover, we have
\begin{equation}
\textup{mes } (\Lambda_{j}'\setminus \Lambda_{j+1})\lesssim e^{-j-10}.
\end{equation}
This completes the proof.
\end{proof}

Thus, the key is to obtain (\ref{l2bound}) and (\ref{offdiagonalestimate}). We refer to $T_{N}^{-1}$ as Green function.

\section{Estimate of Green function}
In this section, we estimate $T_{j,N}^{-1}$ and find $\Lambda_{j}'$ step by step.
Recall that
\begin{equation}
    T_{j}=D+\varepsilon^{2/3}\Lambda S_{j}.
\end{equation}
Denote
\begin{equation}
    \tilde{T}_{j}=\Lambda^{-1} D+\varepsilon^{2/3} S_{j}.
\end{equation}
We have $T_{j}=\Lambda \tilde{T}_{j}$. Thus, we have $T_{j,N}^{-1}=\tilde{T}_{j,N}^{-1} \Lambda_{N}^{-1}$.
To obtain (\ref{l2bound}) and  (\ref{offdiagonalestimate}), it suffices to ensure that
\begin{align}
    &\lVert \tilde{T}_{j,N}^{-1} \rVert <2\exp(\log N)^{C_{2}} \label{l2bound2},\\
    &|\tilde{T}_{j,N}^{-1}(\xi,\xi')|<2e^{-\frac{1}{2}|\xi-\xi'|^{c}} \textup{ for } |\xi-\xi'|>N^{\frac{1}{2}} \label{offdiagonal2}.
\end{align}
The reason we introduce $\tilde{T}$ is that it is self-adjoint.
Denote
$\tilde{D}=\Lambda^{-1} D$,
$\tilde{T}_{j}=\tilde{D}+\varepsilon^{2/3} S_{j}$ and $N_{j}=M^{j}$.  
Furthermore, we denote
\begin{equation}
    \tilde{D}_{\pm,n,k}=\langle  n\rangle^{-\alpha}(\pm k\omega+|n|^{2}+1).
\end{equation}

For $N_{j+1}\le \varepsilon^{-\frac{1}{30d}}$,  we construct $\Lambda_{j}$ and $\Lambda_{j}'$ ($j\ge j_{0}$) directly.
Note that $\Lambda_{j_{0}}$ has been constructed in Section 2. Now we construct $\Lambda_{j_{0}}'$. Partition $[1,2]$ into intervals $\{I_{j_{0+1},\zeta}\}$ of size $\frac{1}{1-e^{-j_{0}-10}}\exp (-(j_{0}+1)^{C_{3}})$. Let 
\begin{align}
         &\Lambda_{j_{0}}'\notag\\
         =&\bigcup \{ I_{j_{0}+1,\zeta}: \exists\  \omega'\in I_{j_{0}+1,\zeta} \textup{ such that } 
         |-k\omega'+|n|^{2}+1|\ge (\log \frac{1}{\varepsilon})^{-1}(1+|k|)^{-\tau},\notag\\
         &\ \ \  \ \ \      |n|<N_{j_{0}+1}, |k|<N_{j_{0}+1}\}.\notag
    \end{align}
Note that $\frac{2}{1-e^{-j_{0}-10}}\exp (-(j_{0}+1)^{C_{3}})<\exp (-j_{0}^{C_{3}})$. It is easy to check that $\Lambda_{j_{0}}'\subset \Lambda_{j_{0}}$.

For $\omega\in\Lambda_{j_{0}}'$, we have 
\begin{align}
    |-k\omega+|n|^{2}+1|&\ge (\log \frac{1}{\varepsilon})^{-1}(1+|k|)^{-\tau}-2N_{j_{0}+1} \cdot \frac{1}{1-e^{-j_{0}-10}}\exp (-(j_{0}+1)^{C_{3}})\notag\\
    &>\frac{1}{2} (\log \frac{1}{\varepsilon})^{-1}(1+|k|)^{-\tau},\notag
\end{align}
for $|n|<N_{j_{0}+1}, |k|<N_{j_{0}+1}$.
Thus, we have
\begin{equation}
    |\tilde{D}_{\pm,n,k}|>\frac{1}{2}(\log \frac{1}{\varepsilon})^{-1} N_{j_{0}+1}^{-\tau-\alpha},
\end{equation}
for $|n|<N_{j_{0}+1}, |k|<N_{j_{0}+1}$.
We obtain
\begin{equation}
\begin{split}
     \lVert \tilde{T}_{j_{0}, N_{j_{0}+1}}^{-1} \rVert
     &=\lVert  (I+\varepsilon^{2/3} \tilde{D}_{N_{j_{0}+1}}^{-1} S_{j_{0}, N_{j_{0}+1}})^{-1} \tilde{D}_{N_{j_{0}+1}}^{-1} \rVert\\
     &\le 4 (\log \frac{1}{\varepsilon}) N_{j_{0}+1}^{\tau+\alpha} <\exp(\log N_{j_{0}+1})^{C_{2}}.
\end{split}
\end{equation}
Furthermore, we have
\begin{equation}
    \tilde{T}_{j_{0}, N_{j_{0}+1}}^{-1}=\tilde{D}_{N_{j_{0}+1}}^{-1}+\sum\limits_{l=1}^{\infty} (-1)^{l} \varepsilon^{\frac{2l}{3}} (\tilde{D}_{N_{j_{0}+1}}^{-1} S_{j_{0}, N_{j_{0}+1}})^{l} \tilde{D}_{N_{j_{0}+1}}^{-1}.
\end{equation}
Note that we have
\begin{align}
    \left|\left((\tilde{D}_{N_{j_{0}+1}}^{-1} S_{j_{0}, N_{j_{0}+1}})^{l} \tilde{D}_{N_{j_{0}+1}}^{-1}\right)(\xi,\xi')\right|
    &\le \sum\limits_{|\xi_{1}|,...,|\xi_{l-1}|\le N_{j_{0}+1}} N_{j_{0}+1}^{10d l} e^{-(1-)|\xi-\xi'|^{c}}\notag\\
    &\le N_{j_{0}+1}^{12dl} e^{-(1-)|\xi-\xi'|^{c}}.
\end{align}
Thus, we obtain
\begin{align}
    |\tilde{T}_{j_{0}, N_{j_{0}+1}}^{-1}(\xi,\xi')|
    &\le \sum\limits_{l=1}^{\infty} (-\varepsilon^{\frac{2}{3}} N_{j_{0}+1}^{12d})^{l} e^{-(1-)|\xi-\xi'|^{c}}\notag\\
    &\le e^{-\frac{1}{2}|\xi-\xi'|^{c}}, \textup{ for } \xi\ne\xi'.
\end{align}
By Lemma \ref{lemma3.1}, we obtain $\Lambda_{j_{0}+1}$.
Proceed this process until $N_{j+1}> \varepsilon^{-\frac{1}{30d}}$. We obtain
\begin{equation}
    \Lambda_{j}\subset \Lambda_{j-1}'\subset \Lambda_{j-1}\subset \cdots \subset \Lambda_{j_{0}}'\subset\Lambda_{j_{0}}.\notag
\end{equation}
Lemma \ref{lemma3.1} and a standard measure estimate as in Section 2 tell us 
\begin{equation}
    \textup{mes } [1,2]\setminus \Lambda_{j}\lesssim (\log \frac{1}{\varepsilon})^{-1}+e^{-j_{0}-10}.
\end{equation}

For $N_{j+1}> \varepsilon^{-\frac{1}{30d}}$, we have the following lemma. 

\begin{lemma}
    Let $N_{j+1}> \varepsilon^{-\frac{1}{30d}}$. Assume there exist $\Lambda_{j}\subset \Lambda_{j-1}\subset \cdots \subset \Lambda_{j_{0}}\subset [1,2]$ and functions $u_{j'}(x,\theta;\omega)\in L^{2}(\mathbb{T}^{d+1}) (j_{0}\le j'\le j)$ defined smoothly for $\omega\in [1,2]$ such that ($j'$-i), ($j'$-ii), ($j'$-iii), ($j'$-iv) $(j_{0}\le j'\le j)$ and ($j'$-v) $(j_{0}+1\le j'\le j)$ hold.
    Moreover, assume we have
    \begin{align}
    &\lVert \tilde{T}_{j', N_{j'+1}}^{-1} \rVert <2\exp(\log N_{j'+1})^{C_{2}} \label{l2bound2j'},\\
    &|\tilde{T}_{j', N_{j'+1}}^{-1}(\xi,\xi')|<2e^{-\frac{1}{2}|\xi-\xi'|^{c}} \textup{ for } |\xi-\xi'|>N_{j'+1}^{\frac{1}{2}} \label{offdiagonal2j'}.
\end{align}
for $\omega\in\Lambda_{j'+1}$ for $j_{0}\le j'\le j-1$. Then, there exists a subset $\Lambda_{j}'\subset \Lambda_{j}$ which is a union of disjoint intervals of size $\frac{1}{1-e^{-j-10}}\exp (-(j+1)^{C_{3}})$ such that 
\begin{align}
    &\lVert \tilde{T}_{j, N_{j+1}}^{-1} \rVert<2\exp (\log N_{j+1})^{C_{2}},\notag\\
    &|\tilde{T}_{j, N_{j+1}}^{-1}(\xi,\xi')|<2e^{-\frac{1}{2}|\xi-\xi'|^{c}}, \textup{ for } |\xi-\xi'|>N_{j+1}^{\frac{1}{2}}.
\end{align}
Moreover, we have
\begin{equation}
    \textup{mes } \Lambda_{j}\setminus\Lambda_{j}'\le N_{j+1}^{-10d}.
\end{equation}
\label{lemma4.1}
\end{lemma}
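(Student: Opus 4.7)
The plan is to carry out the standard CWB multiscale inductive step at scale $N_{j+1}$, using the Green's function estimates (\ref{l2bound2j'})--(\ref{offdiagonal2j'}) at smaller scales as the induction hypothesis. I would first partition the index set $\{|\xi|<N_{j+1}\}$ into \emph{regular} sites $\xi=(\pm,n,k)$ with $\min_{\pm}|\pm k\omega+|n|^{2}+1|>N_{j+1}^{\delta}$ for a small $\delta$, and \emph{singular} sites otherwise. On regular sites, the hypothesis $\alpha<\delta$ gives $|\tilde{D}_{\pm,n,k}|\gtrsim N_{j+1}^{\delta-\alpha}$, which dominates the perturbation $\varepsilon^{2/3}S_{j}$; the corresponding block of $\tilde{T}_{j}$ is invertible by a Neumann expansion, yielding both operator-norm and exponential off-diagonal bounds for its inverse. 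This is analogous to, and easier than, the direct Neumann inversion carried out for the small-$j$ regime earlier in Section~4.

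Next I would isolate the singular sites. Using a Bourgain-type separation lemma for the level sets of $|n|^{2}\in\mathbb{Z}$ in dimension $d$, the singular sites at scale $N_{j+1}$ decompose into disjoint clusters $\{\Omega_{a}\}$ whose diameters are bounded by a polylogarithmic quantity (in any case by $N_{j}^{1/2}$), with mutual distance growing polynomially in $N_{j+1}$. Each cluster is then contained in a translate of a box $\{|\xi|<N_{j'+1}\}$ for some previous scale $j'$ determined by its diameter. Using the Toeplitz structure of $S_{j}$ to recenter and absorbing the cost of replacing $u_{j'}$ by $u_{j}$ into a negligible error (thanks to the smoothing provided by Lemma~\ref{lemma3.1}), I would identify the local inverse $\tilde{T}_{j}|_{\Omega_{a}}^{-1}$ with a perturbation of $\tilde{T}_{j',N_{j'+1}}^{-1}$ to which the induction hypothesis directly applies.

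Then I would carry out the measure exclusion. For each cluster $\Omega_{a}$, the set of $\omega\in\Lambda_{j}$ on which the local inverse fails the required bound is controlled by a Cartan/semialgebraic-type estimate on the characteristic polynomial of the self-adjoint operator $\tilde{T}_{j}|_{\Omega_{a}}$, exploiting the non-degeneracy of $\partial_{\omega}\tilde{D}$. Summing over at most $N_{j+1}^{O(d)}$ candidate clusters yields $\textup{mes}(\Lambda_{j}\setminus\Lambda_{j}')\le N_{j+1}^{-10d}$ once $C_{2}$ has been chosen large enough. Finally the full inverse $\tilde{T}_{j,N_{j+1}}^{-1}$ is assembled through a resolvent identity (or the coupling-lemma variant of \cite{BB20}) between the regular block and the singular blocks: the operator-norm bound (\ref{l2bound2}) follows from a Schur test counting the polynomial number of clusters, and the off-diagonal bound (\ref{offdiagonal2}) follows by telescoping the exponential decay across the separation gaps.

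The main obstacle will be the interaction of the unbounded factor $\langle n\rangle^{\alpha}$ with the separation structure. The available separation lemma for $|n|^{2}$ in $\mathbb{Z}^{d}$ produces cluster diameters with exponents blowing up like $(2d+2)^{d+2}$, while the singular-site threshold $N^{\delta}$ must simultaneously be large enough for the separation lemma to produce small clusters and small enough to guarantee $\langle n\rangle^{\alpha}\ll N^{\delta}$ on regular sites. This dual constraint is exactly what forces the quantitative bound $\alpha<\frac{1}{30(d(2d+2)^{d+2}+2)}$ of Theorem~\ref{maintheorem}, and managing the accumulation of $\langle n\rangle^{\alpha}$ losses in the resolvent expansion without destroying exponential off-diagonal decay will be the most delicate technical point.
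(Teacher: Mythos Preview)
Your outline captures the right multiscale architecture (regular/singular partition, separation lemma for $|n|^{2}$, cluster analysis, measure exclusion, coupling), and your final paragraph correctly pinpoints why the constraint on $\alpha$ arises. However, there is a genuine gap in your step~3.

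You propose to handle each singular cluster $\Omega_{a}$ by ``using the Toeplitz structure of $S_{j}$ to recenter'' and then invoking the induction hypothesis (\ref{l2bound2j'})--(\ref{offdiagonal2j'}) at a smaller scale. This does not work: while $S_{j}$ is indeed Toeplitz, the diagonal $\tilde{D}_{\pm,n,k}=\langle n\rangle^{-\alpha}(\pm k\omega+|n|^{2}+1)$ is \emph{not} translation-invariant in the spatial variable $n$. A shift $(n,k)\mapsto(n+n_{0},k+k_{0})$ sends $|n|^{2}$ to $|n+n_{0}|^{2}$, and there is no way to absorb this into the single parameter $\omega$. The induction hypothesis is stated only for the origin-centered box $\{|\xi|<N_{j'+1}\}$, and nothing here transports it to an arbitrary translate.

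The paper's proof resolves this differently. It covers $[-N_{j+1},N_{j+1}]^{d+1}$ by a single central block $Q_{0}=[-N_{\lfloor(j+1)/3\rfloor},N_{\lfloor(j+1)/3\rfloor}]^{d+1}$, on which the induction hypothesis (after replacing $u_{\lfloor(j+1)/3\rfloor-1}$ by $u_{j}$ via ($j'$-v)) gives the required bounds directly, together with off-center blocks $Q_{r}$ of side $N_{j+1}^{1/3}$. On each $Q_{r}$ the induction hypothesis is \emph{not} used at all: instead, the separation lemma with $B=N_{j+1}^{\delta}$ breaks the singular set into clusters $\Omega_{r,\kappa}$ of diameter $\lesssim N_{j+1}^{(d\tilde{C}_{d}+2)\delta}$, and on each cluster one performs a direct first-order eigenvalue variation. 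The key device is the change of variable $\omega_{1}'=\omega^{-1}$, which makes $\partial_{\omega_{1}'}(\omega_{1}'\tilde{T}_{\tilde{\Omega}_{r,\kappa}})$ positive definite (its diagonal part is $(|n|^{2}+1)\langle n\rangle^{-\alpha}\ge 1$); this yields $|\partial_{\omega_{1}'}E|\gtrsim 1$ for every eigenvalue and hence a measure exclusion of size $N^{1/2-C_{1}}$ per cluster, with no Cartan or semialgebraic machinery needed. Two coupling lemmas are then applied in succession: Lemma~\ref{coupling lemma2} assembles each $Q_{r}$-inverse from its cluster inverses, and Lemma~\ref{coupling lemma1} pastes the $Q_{0}$- and $Q_{r}$-inverses into $\tilde{T}_{j,N_{j+1}}^{-1}$.
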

\begin{proof}
    To control $\tilde{T}_{j, N_{j+1}}^{-1}$, we cover $Q=[-N_{j+1}, N_{j+1}]^{d+1}$ by intervals $Q_{0}=[-N_{\lfloor\frac{j+1}{3}\rfloor}, N_{\lfloor\frac{j+1}{3}\rfloor}]$ and intervals  $Q_{r}$ in $\mathbb{Z}^{d+1}$ of size $N_{j+1}^{1/3}$ such that $\textup{dist } (0, Q_{r})>N_{j+1}^{1/4}$.
By the inductive hypothesis, we have
\begin{align}
    &\lVert \tilde{T}_{\lfloor (j+1)/3\rfloor-1, N_{\lfloor (j+1)/3\rfloor}}^{-1} \rVert <2\exp(\log N_{\lfloor (j+1)/3\rfloor})^{C_{2}} \label{l2bound2lfloorj2rfloor},\\
    &|\tilde{T}_{\lfloor (j+1)/3\rfloor-1, N_{\lfloor (j+1)/3\rfloor}}^{-1}(\xi,\xi')|<2 e^{-\frac{1}{2}|\xi-\xi'|^{c}} \textup{ for } |\xi-\xi'|>N_{\lfloor (j+1)/3\rfloor}^{\frac{1}{2}} \label{offdiagonal2lfloorj2rfloor}.
\end{align}
Since $\lVert v_{j'} \rVert<e^{-\frac{11}{8}(M^{j'+1})^{c}}$, we have
\begin{equation}
\lVert (u_{j}-u_{\lfloor (j+1)/3 \rfloor-1})(\xi) \rVert<e^{-(\frac{3}{8}-)(M^{\lfloor (j+1)/3 \rfloor})^{c}} e^{-|\xi|^{c}}.    
\end{equation}
Thus, we have
\begin{equation}
    |(\tilde{T}_{j}-\tilde{T}_{\lfloor (j+1)/3 \rfloor-1})(\xi,\xi')|<e^{-\frac{1}{4}(M^{\lfloor (j+1)/3 \rfloor})^{c}} e^{-|\xi-\xi'|^{c}}.
    \label{chajv}
\end{equation}
Form (\ref{l2bound2lfloorj2rfloor}), (\ref{offdiagonal2lfloorj2rfloor}), (\ref{chajv}) and $e^{-\frac{1}{4}(M^{\lfloor (j+1)/3 \rfloor})^{c}} N_{\lfloor (j+1)/3 \rfloor}^{C} \exp(\log N_{\lfloor (j+1)/3 \rfloor})^{C_{2}}\ll 1$, we obtain
\begin{align}
    &\lVert \tilde{T}_{j, N_{\lfloor (j+1)/3\rfloor}}^{-1} \rVert <4\exp(\log N_{\lfloor (j+1)/3\rfloor})^{C_{2}} \label{l2bound2lfloorj2rfloorj},\\
    &|\tilde{T}_{j, N_{\lfloor (j+1)/3\rfloor}}^{-1}(\xi,\xi')|<4e^{-\frac{1}{2}|\xi-\xi'|^{c}} \textup{ for } |\xi-\xi'|>N_{\lfloor (j+1)/3\rfloor}^{\frac{1}{2}} \label{offdiagonal2lfloorj2rfloorj}.
\end{align}

To estimate $\tilde{T}_{j, Q_{r}}^{-1}$, we need the following  lemma:
\begin{lemma}
    Fix any large number $B$. There is a partition $\{\pi_{\zeta}\}$ of $\mathbb{Z}^{d}$ satisfying the properties 
    $$
    \textup{diam } \pi_{\zeta}< B^{\tilde{C}_{d}}
    $$
    and 
    $$
    |n-n'|+||n|^{2}-|n'|^{2}|>B, \textup{ if } n\in \pi_{\zeta}, n'\in\pi_{\zeta'}, \zeta\ne\zeta'.
    $$
    Here, $\tilde{C}_{d}=(2d+2)^{d+2}$. 
    \label{fenlidingli}
\end{lemma}

In the sequel, we omit the subscript $j$ of $T_{j}$ and the subscript $j+1$ of $N_{j+1}$, i.e., we denote $T_{j}=T$ and $N_{j+1}=N$. Fix $\omega$.
Denote 
\begin{equation}
    \Omega_{r}=\{(n,k)\in Q_{r}: \min\limits_{\pm}|\tilde{D}_{\pm,n,k}|=\min\limits_{\pm}|\langle n \rangle^{-\alpha} (\pm k\omega+|n|^{2}+1)|<1\}.
\end{equation}
Choose a constant $\delta$ such that $\delta>\alpha$ and $(d\tilde{C}_{d}+2)\delta<\frac{1}{30}$, which can be chosen since we assume $\alpha<\frac{1}{30(d(2d+2)^{d+2}+2)}$. Let $B=N^{\delta}$ in Lemma \ref{fenlidingli}, and denote by $\{\pi_{\zeta}\}$ the partition of $\mathbb{Z}^{d}$. 
By Lemma \ref{fenlidingli}, we have
\begin{equation}
    \textup{diam } \pi_{\zeta}< N^{\delta \tilde{C}_{d}},
    \label{sizedaxiao}
\end{equation}
and
\begin{equation}
    |n-n'|+||n|^{2}-|n'|^{2}|>N^{\delta}, \textup{ if } n\in \pi_{\zeta}, n'\in\pi_{\zeta'}, \zeta\ne\zeta'.
    \label{jianjv}
\end{equation}
Now we consider the structure of $\Omega_{r}$.
If $(n,k), (n',k')\in \Omega_{r}$, we have
\begin{equation}
    ||n|^{2}-|k\omega||< N^{\alpha}+1 \textup{ and } ||n'|^{2}-|k'\omega||<N^{\alpha}+1.
\end{equation}
Thus, we have
\begin{equation}
    ||n|^{2}-|n'|^{2}|<2|k-k'|+2 N^{\alpha}+2.
\end{equation}
If $n\in \pi_{\zeta}$, $n'\in \pi_{\zeta'}$, $\zeta\ne \zeta'$, then
\begin{equation}
    2|k-k'|+|n-n'|+2N^{\alpha}+2>|n-n'|+||n|^{2}-|n'|^{2}|>N^{\delta},
\end{equation}
which implies
\begin{equation}
    |k-k'|+|n-n'|\gtrsim N^{\delta}.
\end{equation}
Fix $|n|<N$, the number  of $k$ such that $\min\limits_{\pm}|\langle n \rangle^{-\alpha} (\pm k\omega+|n|^{2}+1)|<1$ is at most $4N^{\alpha}$. Hence, we have
\begin{equation}
    \#\{(n,k)\in \Omega_{r}: n\in\pi_{\zeta}\}\lesssim N^{\delta \tilde{C}_{d} d}\cdot N^{\alpha}.
\end{equation}
The above argument allows us to obtain a partition of $\Omega_{r}$:
\begin{equation}
    \Omega_{r}=\mathop{\cup}\limits_{\kappa} \Omega_{r,\kappa},
\end{equation}
such that
\begin{align}
    &\textup{diam } \Omega_{r, \kappa}\lesssim N^{\delta \tilde{C}_{d} d+\alpha+\delta}<N^{(\tilde{C}_{d}d+2)\delta},\\
    &\textup{dist }(\Omega_{r,\kappa},\Omega_{r,\kappa'})\gtrsim N^{\delta}.
\end{align}
Let $\omega'$ in the $O(N^{-2})$-neighborhood of $\omega$.  We have 
\begin{equation}
    \min\limits_{\pm}|\langle n \rangle^{-\alpha} (\pm k\omega'+|n|^{2}+1)|>\frac{1}{2}, \textup{ for } (n,k)\notin \Omega_{r}.
\end{equation}
Hence, $\{\Omega_{r,\kappa}\}$ remains fixed for $\omega'$ in the $O(N^{-2})$-neighborhood of $\omega$. Denote this neighborhood by $I_{r,\kappa, s}$ ($s<N^{2}$, and the subscript $s$ corresponds to the neighborhood of $\omega$).  Let $\tilde{\Omega}_{r,\kappa,s}$ be the $N^{\frac{\delta}{2}}$-neighborhood of $\Omega_{r,\kappa,s}$.
Note that $(d\tilde{C}_{d}+2)\delta<\frac{1}{30}$.
As long as we  ensure that 
\begin{equation}
    \lVert \tilde{T}_{\tilde{\Omega}_{r,\kappa,s }}^{-1} \rVert<N^{C_{1}}
\end{equation}
for all $\kappa$,  applying Lemma \ref{coupling lemma2} gives us
\begin{align}
    &\lVert \tilde{T}_{Q_{r}}^{-1} \rVert\lesssim N^{C_{1}+1}, \label{l2boundQr}  \\
    &| \tilde{T}_{Q_{r}}^{-1}(\xi,\xi')|<e^{-\frac{1}{10}|\xi-\xi'|^{c}}, \textup{for } |\xi-\xi'|>N^{\frac{1}{5}}. \label{offdiagonalQr}
\end{align}
Let $\omega_{1}'=\omega'^{-1}$ and write
\begin{equation}
    \omega_{1}' \tilde{T}_{\tilde{\Omega}_{r,\kappa, s}}=
    \begin{pmatrix}
        \langle n \rangle^{-\alpha}(-k+\omega_{1}' (|n|^{2}+1)) & 0\\
        0 & \langle n \rangle^{-\alpha}(k+\omega_{1}'(|n|^{2}+1)) 
    \end{pmatrix}+\varepsilon^{2/3} \omega_{1}' S_{\tilde{\Omega}_{r,\kappa}}.
\end{equation}
Then, we have
\begin{equation}
    \partial_{\omega_{1}'} (\omega_{1}' \tilde{T}_{\tilde{\Omega}_{r,\kappa, s}})=
    \begin{pmatrix}
        (|n|^{2}+1)\langle n \rangle^{-\alpha}&0\\
        0 & (|n|^{2}+1)\langle n \rangle^{-\alpha}
    \end{pmatrix}+O(\varepsilon^{2/3}).
\end{equation}
Denote $E_{\tilde{\Omega}_{r,\kappa, s}}(\omega_{1}')$ as an eigenvalue of $\omega_{1}'\tilde{T}_{\tilde{\Omega}_{r,\kappa, s}}$.
The first-order eigenvalue variation implies that
\begin{equation}
    |\partial_{\omega_{1}'}  E_{\tilde{\Omega}_{r,\kappa, s}}|\gtrsim 1.
\end{equation}
Thus, there exists a subset $\tilde{I}_{r,\kappa, s}$  of $I_{r,\kappa, s}$ satisfying
\begin{itemize}
    \item $\textup{mes } \tilde{I}_{r,\kappa, s} \lesssim N^{1/2} N^{-C_{1}}$.
    \item For $\omega'\in I_{r,\kappa, s} \setminus \tilde{I}_{r,\kappa, s}$, we have
    \begin{equation}
        \lVert \tilde{T}_{\tilde{\Omega}_{r,\kappa, s}}^{-1} \rVert<N^{C_{1}}.
        \notag
    \end{equation}
\end{itemize}
Thus, for $\omega\in \Lambda_{j-1}\setminus \mathop{\cup}\limits_{r,\kappa,s} \tilde{I}_{r,\kappa, s}$, we have
\begin{equation}
    \lVert \tilde{T}_{\tilde{\Omega}_{r,\kappa}}^{-1} \rVert<N^{C_{1}}, \textup{ for all } r,\kappa.
\end{equation}
By applying Lemma \ref{coupling lemma2}, we have (\ref{l2boundQr}) and (\ref{offdiagonalQr}) for all $r$. Furthermore, by applying Lemma \ref{coupling lemma1}, we have
\begin{align}
    &\lVert \tilde{T}_{j, N_{j+1}}^{-1} \rVert <\exp(\log N_{j+1})^{C_{2}} \label{l2bound2j},\\
    &|\tilde{T}_{j, N_{j+1}}^{-1}(\xi,\xi')|<e^{-\frac{1}{2}|\xi-\xi'|^{c}} \textup{ for } |\xi-\xi'|>N_{j+1}^{\frac{1}{2}} \label{offdiagonal2j},
\end{align}
for $\omega\in \Lambda_{j-1}\setminus \mathop{\cup}\limits_{r,\kappa,s} \tilde{I}_{r,\kappa, s}$.
Now we construct $\Lambda_{j}'$. Partition $\Lambda_{j}$ into intervals $\{I_{j,\zeta}\}$ of size $\frac{1}{1-e^{-j-10}}\exp(-(j+1)^{C_{3}})$.
Let 
\begin{equation}
    \Lambda_{j}'=\cup \{I_{j,\zeta}: \exists\ \omega\in I_{j,\zeta}  \textup{ such that } (\ref{l2bound2j}), (\ref{offdiagonal2j}) \}.
\end{equation}
By a perturbation argument, we have
\begin{align}
    &\lVert \tilde{T}_{j, N_{j+1}}^{-1} \rVert <2\exp(\log N_{j+1})^{C_{2}},\\
    &|\tilde{T}_{j, N_{j+1}}^{-1}(\xi,\xi')|<2 e^{-\frac{1}{2}|\xi-\xi'|^{c}} \textup{ for } |\xi-\xi'|>N_{j+1}^{\frac{1}{2}}, 
\end{align}
for $\omega\in\Lambda_{j}'$.
Moreover, we have
    \begin{align}
         \textup{mes } \Lambda_{j}\setminus \Lambda_{j}'&<\textup{mes } \mathop{\cup}\limits_{r,\kappa,s} \tilde{I}_{r,\kappa, s}\notag\\
        &<N_{j+1}^{2d} N_{j+1}^{2} N_{j+1}^{-C_{1}+\frac{1}{2}}<N_{j+1}^{-C_{1}+2d+3}<N_{j+1}^{-10d},  
    \end{align}
     where the second inequality follows from the fact that the number of both $r$ and $\kappa$ is less than $N_j^{d}$.
\end{proof}

 Theorem \ref{maintheorem} follows from Lemma \ref{lemma2.3}, Lemma \ref{lemma3.1} and Lemma \ref{lemma4.1}.

\appendix
\section{Coupling Lemma}
In this appendix, we state two coupling lemmas whose proof can be found in \cite[Lemma 5.3, Lemma 7]{bourgain1998quasi}.
\begin{lemma}
Assume $T$ satisfies 
\begin{equation}
    |T(\xi,\xi')|<e^{-|\xi-\xi'|^{c}} \textup{ for } \xi\ne\xi'.
\end{equation}
Let $\Omega\subset \mathbb{Z}^{d}$ be an interval and assume $\Omega=\mathop{\cup}\limits_{\zeta} \Omega_{\zeta}$ a covering of $\Omega$ with intervals $\Omega_{\zeta}$ satisfying
\begin{itemize}
    \item $|T_{\Omega_{\zeta}}^{-1}(\xi,\xi')|<B$.
    \item $|T_{\Omega_{\zeta}}^{-1}(\xi,\xi')|<K^{-C}$ for $|\xi-\xi'|>\frac{K}{100}$.
    \item For each $\xi\in\Omega$, there is $\zeta$ such that 
    \begin{equation}
        B_{K}(\xi)\cap \Omega=\{\xi'\in\Omega:|\xi'-\xi|\le K\}\subset \Omega_{\zeta}.
    \end{equation}
    \item $\textup{diam } \Omega_{\zeta}<C' K$ for each $\zeta$. 
\end{itemize}
Here, $C>C(d)$ and $B, K$ are numbers satisfying the relation
    \begin{equation}
        \log B<\frac{1}{100} K^{c} \textup{ and } K>K_{0}(c, C', r).
    \end{equation}
Then 
\begin{align}
    &|T_{\Omega}^{-1}(\xi,\xi')|<2B,\\
    &|T_{\Omega}^{-1}(\xi,\xi')|<e^{-\frac{1}{2}|x-y|^{c}} \textup{ for } |\xi-\xi'|>(100 C'K)^{\frac{1}{1-c}}.
\end{align}
    \label{coupling lemma1}
\end{lemma}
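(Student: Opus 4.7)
The plan is to construct an approximate right-inverse $R$ of $T_\Omega$ from the local inverses $T^{-1}_{\Omega_\zeta}$, show that $T_\Omega R=I+E$ with $E$ supported away from the diagonal and small in norm, and then extract both conclusions from the Neumann expansion $T^{-1}_\Omega=R\sum_{n\ge 0}(-E)^n$.

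For each $\xi'\in\Omega$ the third bullet provides an index $\zeta(\xi')$ with $B_K(\xi')\cap\Omega\subset \Omega_{\zeta(\xi')}$; set
\[R(\xi,\xi')=T^{-1}_{\Omega_{\zeta(\xi')}}(\xi,\xi')\text{ for }\xi\in\Omega_{\zeta(\xi')},\ \text{else }0.\]
Thus column $\xi'$ of $R$ is the corresponding column of a local inverse on a patch containing the full $K$-ball about $\xi'$. Immediately $R(\xi,\xi')=0$ for $|\xi-\xi'|>C'K$, with $|R(\xi,\xi')|<B$ on the slab $|\xi-\xi'|\le K/100$ and $|R(\xi,\xi')|<K^{-C}$ on the annulus $K/100<|\xi-\xi'|\le C'K$. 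A direct computation, using $T_{\Omega_{\zeta(\xi')}}T^{-1}_{\Omega_{\zeta(\xi')}}=I_{\Omega_{\zeta(\xi')}}$, shows $E(\xi,\xi')=0$ whenever $\xi\in\Omega_{\zeta(\xi')}$ (in particular on $\{|\xi-\xi'|\le K\}$), whereas for $\xi\notin\Omega_{\zeta(\xi')}$,
\[E(\xi,\xi')=\sum_{\eta\in\Omega_{\zeta(\xi')}} T(\xi,\eta)\,T^{-1}_{\Omega_{\zeta(\xi')}}(\eta,\xi').\]
Splitting the sum at $|\eta-\xi'|=K/100$, the near piece ($|\xi-\eta|\ge K/2$) is estimated by the ambient decay of $T$ against the pointwise bound $B$ on $T^{-1}_{\Omega_\zeta}$, while the far piece is estimated by the second bullet against the convergent $\sum_\eta|T(\xi,\eta)|$; the hypothesis $\log B<K^c/100$ then delivers $|E(\xi,\xi')|<K^{-C/2}$, together with an extra stretched-exponential factor inherited from $T$ when $|\xi-\xi'|\gg C'K$.

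Because $E$ vanishes on the inner slab and is bounded pointwise by $K^{-C/2}$, Schur's test yields $\|E\|<\tfrac{1}{2}$, so $(I+E)^{-1}=\sum_{n\ge 0}(-E)^n$ converges with norm $<2$. For the first conclusion, I bound $|T^{-1}_\Omega(\xi,\xi')|$ entrywise from the chain expansion: in the regime $|\xi-\xi'|\le C'K$ the dominant term is $R(\xi,\xi')$, with $|R(\xi,\xi')|<B$, while the corrections $(RE^n)(\xi,\xi')$ with $n\ge 1$ carry an extra factor $K^{-C/2}$ per step, so their sum is $o(B)$ and $|T^{-1}_\Omega(\xi,\xi')|<2B$ follows. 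For the off-diagonal decay, any nonzero chain $\xi\to\eta_1\to\cdots\to\eta_n\to\xi'$ contributing to $(RE^n)(\xi,\xi')$ begins with an $R$-step of length $\le C'K$ and continues with $n$ $E$-steps each of length $\ge K$, so for $|\xi-\xi'|>(100C'K)^{1/(1-c)}$ the chain must use at least $n\gtrsim |\xi-\xi'|/(C'K)$ steps. The accrued factor $(K^{-C/2})^n$, combined with the long-range stretched-exponential decay of $E$ inherited from $T$, outpaces the target rate because the threshold is chosen precisely so that $n\cdot\tfrac{C}{2}\log K\ge |\xi-\xi'|^c/2$, yielding $|T^{-1}_\Omega(\xi,\xi')|<e^{-|\xi-\xi'|^c/2}$.

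The main obstacle will be the bookkeeping in this last off-diagonal estimate: each $E$-step contributes only a polynomial gain $K^{-C}$, whereas the target decay is stretched-exponential $e^{-|\xi-\xi'|^c/2}$. Converting many polynomial factors into one stretched-exponential factor depends on a precise balance of the three quantitative hypotheses $C>C(d)$, $\log B<K^c/100$, and $K>K_0(c,C',d)$, together with the geometric threshold $(100C'K)^{1/(1-c)}$, which is calibrated to make the chain length $n$ long enough for the polynomial losses to accumulate into the desired stretched-exponential gain. Verifying that the Schur-type count of chains does not overwhelm this gain, and that the long-range decay of $T$ survives iteration through powers of $E$, are the delicate quantitative points of the proof.
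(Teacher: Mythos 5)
The paper itself does not prove this lemma: Appendix~A states it and refers to \cite[Lemma~5.3, Lemma~7]{bourgain1998quasi} for the proof, so there is no in-paper argument to compare against. Your reconstruction is nevertheless the standard Bourgain ``paving'' argument and, in outline, is the right proof: build the block-wise approximate right inverse $R$ from the local inverses using the third bullet to guarantee a $K$-gap, observe that $E=T_\Omega R-I$ vanishes when $\xi\in\Omega_{\zeta(\xi')}$ (hence on $|\xi-\xi'|\le K$), estimate $E$ off the gap by splitting the sum over $\eta\in\Omega_{\zeta(\xi')}$ at $|\eta-\xi'|=K/100$, close the Neumann series by a Schur test, and read off the two conclusions from $T_\Omega^{-1}=R\sum_{n\ge 0}(-E)^n$. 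Your identification of the role of the quantitative hypotheses ($\log B<K^c/100$ to absorb $B$ against the decay $e^{-(K/2)^c}$ in the near piece, $C>C(d)$ to beat the polynomial Schur count, $K>K_0$ to make everything kick in) is correct, and the threshold $(100C'K)^{1/(1-c)}$ is indeed exactly what forces a chain built of $O(C'K)$-length steps to have $n\gtrsim L^c$ links, which is where the stretched-exponential decay comes from.

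Two small points worth tightening if you were to write this out. First, for the bound $|T_\Omega^{-1}(\xi,\xi')|<2B$ it is cleaner to argue via $G:=T_\Omega^{-1}=R-GE$ together with the a~priori operator-norm bound $\|G\|\lesssim (C'K)^dB$ (Schur on the band matrix $R$ and $\|(I+E)^{-1}\|\le 2$); then $|(GE)(\xi,\xi')|\le \|G\|\,\|E(\cdot,\xi')\|_{\ell^2}\lesssim (C'K)^{3d/2}BK^{-C/2}\ll B$ for $C>C(d)$, rather than appealing to a term-by-term ``extra factor per step'' which does not literally control pointwise values of $RE^n$. Second, the step ``the chain must use at least $n\gtrsim|\xi-\xi'|/(C'K)$ steps'' holds only for chains in which every $E$-step has length $O(C'K)$; you do flag that a long step brings the stretched-exponential decay of $T$ instead, but in a full write-up one must genuinely split into short-step and long-step chains (or, equivalently, use a submultiplicative weight $e^{\lambda\ell^c}$ with $\lambda$ small and track the long-step contribution separately), since a single long $E$-step can in principle traverse the whole distance. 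Neither of these is a gap in the idea, just in the bookkeeping you yourself flagged as delicate; the approach is the same one used by Bourgain.
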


\begin{lemma}
    Fix some constants $\frac{1}{10}>\varepsilon_{1}>\varepsilon_{2}>\varepsilon_{3}>0$ and let $\Omega$ be a subset of the $M$-ball in $\mathbb{Z}^{d+1}$ ($M\to \infty$). Assume $\{\Omega_{\kappa}\}$ a collection of subsets of $\Omega$ satisfying 
    \begin{align}
        \textup{diam } \Omega_{\kappa}<M^{\varepsilon_{1}},\\
        \textup{dist } (\Omega_{\kappa}, \Omega_{\kappa'})>M^{\varepsilon_{2}} \textup{ for } \kappa\ne\kappa'.
    \end{align}
    Write $T=D+S$ ($D$ is a diagonal matrix) where 
    \begin{equation}
        \lVert S \rVert<\varepsilon,\ \  |S(\xi,\xi')|<\varepsilon e^{-|\xi-\xi'|^{c}},  
    \end{equation}
    where $c$ is sufficiently small  and
    \begin{align}
        |D(\xi)|>\rho\gg \varepsilon \textup{\ \  if\ \  } \xi\in\Omega\setminus \cup \Omega_{\kappa},\\
        \lVert (T|_{\tilde{\Omega}_{\kappa}})^{-1} \rVert<M^{C} \textup{ for all } \kappa,
    \end{align}
    where $\tilde{\Omega}_{\kappa}$ is an $M^{\varepsilon_{3}}$-neighborhood of $\Omega_{\kappa}$.
    Then 
    \begin{equation}
        \lVert (T|\Omega)^{-1} \rVert<\rho^{-1} M^{C+1},
    \end{equation}
    and \begin{equation}
        |(T|\Omega)^{-1}(\xi, \xi')|<e^{-\frac{1}{10}|\xi-\xi'|^{c}} \textup{\ \  if\ \  } |\xi-\xi'|>M^{2\varepsilon_{1}}.
    \end{equation}
\label{coupling lemma2}
\end{lemma}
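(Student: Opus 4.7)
The plan is to produce $(T|_\Omega)^{-1}$ from the local inverses $(T|_{\tilde\Omega_\kappa})^{-1}$ and the diagonal reciprocal $D^{-1}$ via a paramatrix (approximate inverse) construction followed by a Neumann series correction, which is the standard Bourgain recipe for such coupling statements.

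First, I build a paramatrix $R$ on $\Omega$ column by column. For $\xi \in \Omega$, the separation hypothesis $\mathrm{dist}(\Omega_\kappa,\Omega_{\kappa'})>M^{\varepsilon_2}$ together with $\varepsilon_1<\varepsilon_2$ ensures $\xi$ lies in at most one $\Omega_\kappa$ (in fact in at most one $\tilde\Omega_\kappa$ after thickening). If $\xi\in\Omega_\kappa$, define the column $R(\cdot,\xi)$ to be the $\xi$-th column of $(T|_{\tilde\Omega_\kappa})^{-1}$, supported in $\tilde\Omega_\kappa$. Otherwise $\xi$ is ``good'' in the sense $|D(\xi)|>\rho$, and I set $R(\xi,\xi)=D(\xi)^{-1}$, $R(\xi',\xi)=0$ for $\xi'\neq\xi$. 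This gives the crude norm bound $\|R\|\lesssim \rho^{-1}+M^C$ and forces each column of $R$ to be supported in a set of diameter $\lesssim M^{\varepsilon_1}$.

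Next, I compute $TR=I+E$ and show $\|E\|\ll 1$ together with strong off-diagonal decay for $E$. For a good column $\xi\notin\bigcup\Omega_\kappa$, one has $(TR)(\xi'',\xi)=T(\xi'',\xi)/D(\xi)$, so the diagonal entry is $1+S(\xi,\xi)/D(\xi)=1+O(\varepsilon/\rho)$ while the off-diagonal $\xi''\neq\xi$ entries are bounded by $(\varepsilon/\rho)e^{-|\xi''-\xi|^c}$. For a column $\xi\in\Omega_\kappa$, $(TR)(\xi'',\xi)-\delta_{\xi'',\xi}$ vanishes for $\xi''\in\tilde\Omega_\kappa$ by definition of the local inverse; for $\xi''\notin\tilde\Omega_\kappa$ the entry equals $\sum_{\xi'\in\tilde\Omega_\kappa} S(\xi'',\xi')(T|_{\tilde\Omega_\kappa})^{-1}(\xi',\xi)$, and since $\xi\in\Omega_\kappa$ forces $|\xi''-\xi'|\ge M^{\varepsilon_3}$ in this sum, the entry is bounded by $\varepsilon M^C \, e^{-\frac{1}{2}(\mathrm{dist}(\xi'',\tilde\Omega_\kappa))^c}$ using the decay of $S$. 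Combining the two cases (and summing on columns via Schur) yields $\|E\|\ll 1$ as well as the pointwise estimate $|E(\xi'',\xi)|\lesssim (\varepsilon/\rho+\varepsilon M^C e^{-\frac{1}{2}M^{c\varepsilon_3}})\,e^{-\frac{1}{2}|\xi''-\xi|^c}$.

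Now invert via Neumann series: $T^{-1}=R\sum_{k\ge 0}(-E)^k$. Smallness of $\|E\|$ gives $\|(I+E)^{-1}\|<2$, hence $\|T^{-1}\|\le 2\|R\|\lesssim \rho^{-1}M^{C+1}$, the desired operator-norm bound. For the off-diagonal estimate, note that each column of $R$ has support of diameter $\lesssim M^{\varepsilon_1}$, so in the series $R(-E)^k$, any path from $\xi'$ to $\xi$ of total displacement $|\xi-\xi'|>M^{2\varepsilon_1}$ requires either (i) at least one ``long jump'' picking up the negligible factor $e^{-\frac{1}{2}M^{c\varepsilon_3}}$, or (ii) many short jumps whose total decay exponent, after convolving the sub-exponential kernels, still bounds below by $\tfrac{1}{10}|\xi-\xi'|^c$ provided $c$ is small. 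Summing geometrically in $k$ yields the claimed $|(T|\Omega)^{-1}(\xi,\xi')|<e^{-\frac{1}{10}|\xi-\xi'|^c}$.

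The main obstacle is the last step: propagating the sub-exponential decay $e^{-|\cdot|^c}$ through the Neumann expansion. Unlike true exponential decay, the convolution of two $e^{-|\cdot|^c}$ kernels suffers a loss in the constant, which is precisely why the final decay rate degrades from $1$ to $\tfrac{1}{10}$ and why the threshold $|\xi-\xi'|>M^{2\varepsilon_1}$ is imposed (so that each long jump truly costs more than a block diameter). Controlling this loss uniformly in the length of the Neumann chain, while exploiting the huge smallness $e^{-\frac{1}{2}M^{c\varepsilon_3}}$ of the ``bad'' entries of $E$ against the polynomial growth $M^C$ of $R$, is the heart of the argument; the hypothesis that $c$ is sufficiently small is exactly what makes this book-keeping close.
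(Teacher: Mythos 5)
The paper does not prove this lemma; it cites it as Lemma~7 of Bourgain's 1998 Annals paper. Your overall strategy --- paramatrix built from the local inverses and $D^{-1}$, followed by a Neumann correction and bookkeeping of sub-exponential convolutions --- is indeed the standard Bourgain recipe and the right line of attack.

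There is, however, a genuine gap in the error estimate. For a column $\xi\in\Omega_\kappa$ and a row $\xi''\notin\tilde\Omega_\kappa$, you claim that in
\[
E(\xi'',\xi)=\sum_{\xi'\in\tilde\Omega_\kappa} S(\xi'',\xi')\,(T|_{\tilde\Omega_\kappa})^{-1}(\xi',\xi)
\]
one automatically has $|\xi''-\xi'|\ge M^{\varepsilon_3}$. This is false: $\xi'$ ranges over all of $\tilde\Omega_\kappa$, and a site $\xi'$ near the boundary of $\tilde\Omega_\kappa$ can be adjacent to $\xi''$, so $|\xi''-\xi'|$ can be $O(1)$. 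On those terms the $S$-factor gives nothing, and the only thing that can save you is decay of $(T|_{\tilde\Omega_\kappa})^{-1}(\xi',\xi)$ as $\xi'$ moves from $\Omega_\kappa$ into the buffer zone $\tilde\Omega_\kappa\setminus\Omega_\kappa$. That decay is not among the hypotheses --- only the operator-norm bound $\lVert (T|_{\tilde\Omega_\kappa})^{-1}\rVert<M^C$ is --- and it must be \emph{derived} from the assumption $|D(\xi')|>\rho$ on $\tilde\Omega_\kappa\setminus\Omega_\kappa$, e.g.\ by a resolvent/Schur-complement argument comparing $(T|_{\tilde\Omega_\kappa})^{-1}$ with the easily-inverted, rapidly-decaying $(T|_{\tilde\Omega_\kappa\setminus\Omega_\kappa})^{-1}$. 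This step is the technical heart of Bourgain's lemma, and without it the bound $|E(\xi'',\xi)|\lesssim \varepsilon M^C e^{-\frac12 M^{c\varepsilon_3}}$ is unjustified; you only get $\varepsilon M^C$, which is not small. Once this derived decay of the block inverses is in hand, the rest of your argument (smallness of $\lVert E\rVert$, the Neumann sum, and the loss from $1$ to $\frac{1}{10}$ in the sub-exponential exponent under repeated convolution) goes through as you sketch.
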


\end{document}